\documentclass[a4paper, fleqn, oneside]{amsart}
\pdfoutput=1

\usepackage[a4paper]{geometry}

\usepackage[utf8]{inputenc}
\usepackage[T1]{fontenc}
\usepackage[colorlinks]{hyperref}
\usepackage{microtype}
\usepackage{xcolor}
\usepackage{tikz-cd}

\usepackage{enumitem}

\usepackage{comment}

\hypersetup{
  linkcolor=[rgb]{0.3,0.3,0.6},
  citecolor=[rgb]{0.2, 0.6, 0.2},
  urlcolor=[rgb]{0.6, 0.2, 0.2}
}

\usepackage{mathtools, amsthm, amsfonts, amssymb}

\setcounter{tocdepth}{2}

\usepackage{mathrsfs}
\usepackage{pbox}

\usepackage{booktabs}
\usepackage{braket}
\usepackage{commath}
\usepackage{caption}

\usepackage[capitalise]{cleveref}
\usepackage{accents}

\usepackage{tikz}
\usetikzlibrary{positioning}

\usepackage{tikz-cd}

\theoremstyle{plain}
\newtheorem{theorem}{Theorem} %
\newtheorem*{theorem*}{Theorem}
\newtheorem{proposition}[theorem]{Proposition}

\newtheorem{corollary}[theorem]{Corollary}

\theoremstyle{definition}
\newtheorem{definition}[theorem]{Definition}

\newtheorem*{problem*}{Problem}

\newtheorem{remark}[theorem]{Remark}
\newtheorem{example}[theorem]{Example}

\DeclareMathOperator{\supp}{supp}

\DeclareMathOperator{\rank}{R}
\DeclareMathOperator{\subrank}{Q}
\DeclareMathOperator{\bordersubrank}{\underline{Q}}
\DeclareMathAccent{\wtilde}{\mathord}{largesymbols}{"65}
\DeclareMathOperator{\asymprank}{\underaccent{\wtilde}{R}}
\DeclareMathOperator{\asympsubrank}{\underaccent{\wtilde}{Q}}
\DeclareMathOperator{\borderrank}{\underline{R}}

\newcommand{\Oh}{\mathcal{O}}

\newcommand{\FF}{\mathbb{F}}
\newcommand{\CC}{\mathbb{C}}

\newcommand{\NN}{\mathbb{N}}

\newcommand{\ZZ}{\mathbb{Z}}
\newcommand{\RR}{\mathbb{R}}

\DeclareMathOperator{\rk}{rk}

\newcommand{\CW}{\mathrm{CW}}

\newcommand{\combsubrank}{\subrank_\mathrm{M}}
\newcommand{\monsubrank}{\combsubrank}
\newcommand{\asympcombsubrank}{\underaccent{\wtilde}{\mathrm{Q}}_{\mathrm{M}}}
\newcommand{\monasympsubrank}{\asympcombsubrank}

\newcommand{\combdegengeq}{\unrhd_\mathrm{M}}

\newcommand{\asympgeq}{\gtrsim}

\let\leqx\leqslant

\newcommand{\doasympleqx}{%
  \hbox{\ooalign{%
    \noalign{\kern.25ex}
    $\leqslant$\cr
    \noalign{\kern1.25ex}
    \smash{$\sim$}\cr
  }}%
}

\newcommand{\doasympasympleqx}{%
  \hbox{\ooalign{%
    \noalign{\kern.25ex}
    $\leqx$\cr
    \noalign{\kern1.25ex}
    \smash{$\sim$}\cr
    \noalign{\kern0.5ex}
    \smash{$\sim$}\cr
  }}%
}

\newcommand{\doasympgeqx}{%
  \hbox{\ooalign{%
    \noalign{\kern.25ex}
    $\geqslant$\cr
    \noalign{\kern1.25ex}
    \smash{$\sim$}\cr
  }}%
}

\newcommand{\doasympdomleq}{%
  \hbox{\ooalign{%
    \noalign{\kern.25ex}
    $\preccurlyeq$\cr
    \noalign{\kern1.25ex}
    \smash{$\sim$}\cr
  }}%
}

\newcommand{\doasympasympdomleq}{%
  \hbox{\ooalign{%
    \noalign{\kern.25ex}
    $\preccurlyeq$\cr
    \noalign{\kern1.25ex}
    \smash{$\sim$}\cr
    \noalign{\kern0.5ex}
    \smash{$\sim$}\cr
  }}%
}

\DeclareMathOperator{\slicerank}{slicerank}

\newcommand{\prob}{\mathcal{P}}

\newcommand{\aspec}{\Delta}

\setcounter{tocdepth}{1}

\DeclareMathOperator{\cyc}{cyc}

\DeclareMathOperator{\irr}{\mathbf{i}}
\DeclareMathOperator{\monirr}{\mathbf{i}_{\mathrm{M}}}
\newcommand{\mongeq}{\geq_{\mathrm{M}}}
\DeclareMathOperator{\relexp}{\omega}
\DeclareMathOperator{\monrelexp}{\omega_\mathrm{M}}

\newcommand{\cw}{\mathrm{cw}}

\title[Barriers for fast matrix multiplication from irreversibility]{Barriers for fast matrix multiplication\\ from irreversibility}
\author{Matthias Christandl}
\address{Department of Mathematical Sciences, University of Copenhagen, Universitetsparken~5, 2100 Copenhagen Ø, Denmark}
\email{christandl@math.ku.dk}
\author{Péter Vrana}
\address{Department of Geometry, Budapest University of Technology and Economics, Egry József~u.~1., 1111 Budapest, Hungary\vspace{-0.5em}}
\address{MTA-BME Lend\"ulet Quantum Information Theory Research Group} 
\email{vranap@math.bme.hu}
\author{Jeroen Zuiddam}
\address{Institute for Advanced Study, 1 Einstein Drive, Princeton 08540, NJ, USA}
\email{jzuiddam@ias.edu}
\date{\today}

\begin{document}

\begin{abstract}
Determining the asymptotic algebraic complexity of matrix multiplication,
succinctly represented by the matrix multiplication exponent $\omega$,
is a central problem in algebraic complexity theory. 
The best upper bounds on $\omega$, leading to the state-of-the-art~$\omega \leq 2.37..$, have been obtained via Strassen's laser method and its generalization by Coppersmith and Winograd. 
Recent barrier results show limitations for these and related approaches to improve the upper bound on~$\omega$.

We introduce a new and more general barrier, providing stronger limitations than in previous work.
Concretely, we introduce the notion of irreversibility of a tensor, and we prove (in some precise sense) that any approach that uses an irreversible tensor in an intermediate step (e.g.,~as a starting tensor in the laser method) cannot give $\omega = 2$. In quantitative terms, we prove that the best upper bound achievable is lower bounded by twice the irreversibility of the intermediate tensor.
The quantum functionals and Strassen support functionals give (so far, the best) lower bounds on irreversibility. We provide lower bounds on the irreversibility of key intermediate tensors, including the small and big Coppersmith--Winograd tensors, that improve limitations shown in previous work. Finally, we discuss barriers on the group-theoretic approach in terms of monomial irreversibility.
\end{abstract}
\maketitle
\tableofcontents

\section{Introduction}\label{intro}
\subsection{}
Determining the asymptotic algebraic complexity of matrix multiplication is a central open problem in algebraic complexity theory. 
Several methods for constructing fast matrix multiplication algorithms have been developed, but at a high level they typically %
consist of two parts: an efficient reduction of matrix multiplication to an intermediate problem (some bilinear map, i.e.~some~3-tensor) and an efficient algorithm for the intermediate problem.
Recent results have shown \emph{barriers} for such constructions to yield fast matrix multiplication algorithms~\cite{MR3388238, MR3631613, blasiak2017groups, alman_et_al:LIPIcs:2018:8360, 8555139}.
We give a barrier, based on a new notion called irreversibility, that is more general and in some cases stronger than the barriers from previous work. %

\subsection{Matrix multiplication barriers}
The matrix multiplication exponent $\omega$ is defined as the infimum over all real numbers~$\beta$ such that any two $n\times n$ matrices can be multiplied with~$\Oh(n^\beta)$ algebraic operations, and thus~$\omega$ represents the asymptotic algebraic complexity of matrix multiplication.
The bounds $2 \leq \omega \leq 3$ hold trivially. Strassen published the first non-trivial upper bound~$\omega \leq \log_2 7$ in 1969 \cite{strassen1969gaussian}. In the decades that followed, through the development of several ingenious methods by various people, the upper bound was improved to the state-of-the-art bound $\omega \leq 2.37..$, and the pursuit to prove whether~$\omega = 2$ or $\omega >2$ has been ongoing~%
\cite{MR1056627, stothers2010complexity, MR2961552, le2014powers, cohn2003group, cohn2013fast}.  As mentioned before, these upper bound methods typically consist of a reduction of matrix multiplication to an intermediate problem and an efficient algorithm for the intermediate problem.

Ambainis, Filmus and Le Gall~\cite{MR3388238}, for the first time, proved a barrier result for some collection of such methods. Namely, they showed that a variety of methods that go via the big Coppersmith--Winograd tensor as an intermediate problem cannot give~$\omega=2$, and in fact not even $\omega \leq 2.30..$.
We call any lower bound for all upper bounds on $\omega$ that can be obtained by some method, a \emph{barrier} for that method. In general, barriers in the sense of limitations to proof methods have a long history in computational complexity theory and recognizing barriers is a natural step towards finding proof methods that do solve the problem at hand.

Next, Alman and Williams \cite{alman_et_al:LIPIcs:2018:8360, 8555139} extended the realm of barriers beyond the scope of the Ambainis \emph{et al.}~barrier, to a larger collection of methods. 
Also Blasiak, Church, Cohn, Grochow, Naslund, Sawin, and Umans~\cite{MR3631613} and Blasiak, Church, Cohn, Grochow, and Umans~\cite{blasiak2017groups} studied barriers, namely barriers for a subset of the group-theoretic method.
Both the Blasiak \emph{et al.}~and the Alman and Williams barriers rely on studying versions of asymptotic subrank of an intermediate problem.

We give a barrier that applies more generally than all previous barriers and that is in some cases stronger. Our barrier also relies on studying versions of asymptotic subrank, which together with the notion of asymptotic rank we combine into a single parameter called \emph{irreversibility}. Our barrier simplifies and generalizes previous barriers and tightly connects the barrier literature to central notions from the framework of Strassen~\cite{strassen1987relative, strassen1988asymptotic, strassen1991degeneration, christandl2017universal}.
Alman~\cite{alman2018limits} reported very similar independent results shortly after our manuscript appeared on the arXiv, which we jointly presented at the Computational Complexity Conference 2019 in New Brunswick. For all the tensors mentioned, the barriers of Alman are identical to ours. A subtle difference between the works is that Alman considers asymptotic slice rank instead of asymptotic subrank. 

\subsection{Our barrier: informal explanation}
Our barrier relies on two ideas: (i) we think of computational problems as \emph{resources} that can be reduced to one another and (ii) such reductions satisfy a \emph{triangle inequality} which limits the possible chains of reductions.
An informal explanation of our barrier is as follows. The matrix multiplication exponent~$\omega$ is the optimal \emph{rate} at which the problem of multiplying matrices can be reduced to the problem of multiplying numbers, or in other words, the optimal rate at which the problem of multiplying numbers can be transformed into the problem of multiplying matrices,
\begin{equation}
\textnormal{multiplying numbers} \,\xrightarrow{\,\omega\,}\, \textnormal{multiplying matrices}.
\end{equation}
By this we mean that for any $n$ the problem of multiplying $n \times n$ matrices can be reduced to the problem of multiplying $n^{\omega + o(1)}$ pairs of numbers (and some additions, but they do not have an influence on the complexity). We will make these notions precise later. For now, we stick to the high-level picture.
Rates of transformation naturally satisfy a triangle inequality. Therefore,  upper bounds on~$\omega$ can be obtained by combining the rate of transformation $\alpha_1$ from the problem of multiplying numbers to some intermediate problem and the rate of transformation $\alpha_2$ from the intermediate problem to the problem of multiplying matrices; this is the two-component approach alluded to earlier,
\begin{equation}\label{chain}
\textnormal{multiplying numbers} \,\xrightarrow{\,\alpha_1\,}\, \textnormal{intermediate problem} \,\xrightarrow{\,\alpha_2\,}\, \textnormal{multiplying matrices}.
\end{equation}
That is, $\alpha_1\alpha_2 \geq \omega$.
We define the \emph{irreversibility} of the intermediate tensor, roughly speaking, as the optimal rate of transformation from the problem of multiplying numbers to the intermediate problem \emph{and back to the problem of multiplying numbers}. 
Strassen \cite{strassen1988asymptotic} (see \cref{std}) showed that the transformation rate from the matrix multiplication problem to the problem of multiplying numbers is~$\tfrac12$, so %
we can extend the chain in \eqref{chain} to
\begin{multline}\label{chain2}
\textnormal{multiplying numbers} \,\xrightarrow{\,\alpha_1\,}\, \textnormal{intermediate problem} \,\xrightarrow{\,\alpha_2\,}\, \textnormal{multiplying matrices}\\ \,\xrightarrow{\,1/2\,}\, \textnormal{multiplying numbers}. 
\end{multline}
Using the triangle inequality again we see that $\alpha_1 \alpha_2 / 2$ is at least the irreversibility of the intermediate problem, and hence the irreversibility of the intermediate problem provides limitations on the upper bounds $\alpha_1 \alpha_2 \geq \omega$ that can be obtained from \eqref{chain}. This is formalized in \cref{th1}.

\subsection{Explicit numerical barriers}
To exemplify our barrier we show that the support functionals \cite{strassen1991degeneration} and quantum functionals~\cite{christandl2017universal} give (so far, the best) lower bounds on the irreversibility of the following families of intermediate problems. These intermediate problems are encoded as tensors. A tensor is a 3-dimensional array of numbers. We use $e_{i,j,k}$ to denote the tensor that is zero everywhere except for a one in coordinate $(i,j,k)$. We will discuss tensors in more details later. The families of intermediate tensors we consider are:
\begin{itemize}[leftmargin=2em]
\item the small Coppersmith--Winograd tensors 
\begin{equation*}
\cw_q = \sum_{i=1}^q e_{0,i,i} + e_{i,0,i} + e_{i,i,0}
\end{equation*}
\item the big Coppersmith--Winograd tensors 
\begin{equation*}
\CW_q = e_{0,0,q+1} + e_{0,q+1,0} + e_{q+1,0,0} + \sum_{i=1}^q e_{0,i,i} + e_{i,0,i} + e_{i,i,0}
\end{equation*}
\item the reduced polynomial multiplication tensors 
\begin{equation*}
t_n = \sum_{\substack{i,j,k=0:\\i+j = k}}^{n-1} e_{i,j,k}
\end{equation*}
\end{itemize}
Our irreversibility lower bounds lead to the following explicit barriers (\cref{cwirr} and \cref{rpm}), rounded to five decimal places.
\begin{equation*}
%
\begin{minipage}{0.3\textwidth}
\begin{tabular}{ll}
\toprule
$q$ & $\cw_q$-barrier \\
\midrule
2 & 2\\
3 & 2.02538\\
4 & 2.06244\\
5 & 2.09627\\
6 & 2.12549\\
7 & 2.15064\\
\bottomrule
\end{tabular}
\end{minipage}
\begin{minipage}{0.3\textwidth}
%
%
%
\begin{tabular}{ll}
\toprule
$q$ & $\CW_q$-barrier\\
\midrule
1 & 2.16805\\
2 & 2.17795\\
3 & 2.19146\\
4 & 2.20551\\
5 & 2.21913\\
6 & 2.23201\\
\bottomrule
\end{tabular}
\end{minipage}
\begin{minipage}{0.3\textwidth}
%
%
%
\begin{tabular}{ll}
\toprule
$n$ & $t_n$-barrier\\
\midrule
2 & 2.17795\\
3 & 2.16805\\
4 & 2.15949\\
5 & 2.15237\\
6 & 2.14641\\
7 & 2.14135\\
\bottomrule
\end{tabular}
\end{minipage}
\end{equation*}
In \cref{code} we provide code to compute the values in these tables to arbitrary precision and for arbitrary parameters $q$ and $n$.
As suggested by the values in the above tables, both the $\cw_q$-barrier and $\CW_q$-barrier increase when $q$ grows, and converge to 3 when $q$ goes to infinity. The $t_n$-barrier, on the other hand, decreases when $n$ grows, and converges to $2$ when $n$ goes to infinity. Families of tensors for which the barrier converges to 2 in the limit, like $t_n$, may be crucial in proving that $\omega = 2$.

We stress that our method can be applied to any tensor, not only the ones above. The tensors listed above play a special role in the literature and that is why we single them out.
The small and big Coppersmith--Winograd tensors come from the paper \cite{MR1056627}.
They first analyze $\cw_5$ to get an upper bound on $\omega$, and then improve it by analyzing $\CW_6$ and its Kronecker square.
All papers from then on (Stothers, Vassilevska--Williams, Le Gall) use powers of $\CW_5$ to get upper bounds on $\omega$.
The reduced polynomial multiplication tensors $t_n$ give an example of a family of tensors for which the irreducibility goes to 1 when $n$ goes to infinity and hence the barrier becomes trivial in the limit.

\subsection{Comparison and other applications}

Compared to~\cite{MR3388238}, whose barriers apply to the laser method, our barriers are valid for a larger class of approaches (and naturally we obtain lower barriers). 
Compared to~\cite{8555139}, whose barriers apply to what we call monomial degeneration, our barriers are valid for a larger class of approaches but our barriers are also higher.
As a variation on our barrier we introduce a \emph{monomial} version.
Compared to~\cite{MR3631613} and~\cite{blasiak2017groups} our monomial barriers are valid for a class of approaches that includes their simultaneous triple product property (STPP) approach, and thus we provide a uniform view on the barriers that have appeared in the literature.
We have not tried to optimize the barriers that we obtain, but focus instead on introducing the barrier~itself. The barrier in~\cite{alman2018limits} is very similar to ours, except for using asymptotic slice rank instead of asymptotic subrank. 

It will become clear to the reader during the development of our ideas that they not only apply to the problem of fast matrix multiplication, but extend to give barriers for the more general problem of constructing fast rectangular matrix multiplication algorithms (see the follow-up results on barriers for rectangular matrix multiplication in \cite{christ2020barriers}) or even transformations between arbitrary powers of tensors. Such transformations may represent, for example, asymptotic SLOCC (stochastic local operations and classical communication) reductions among multipartite quantum states \cite{bennett2000exact, MR1804183, MR1910235,MR2515619}.

\medskip
We define irreversibility in \cref{sec:irr}. In \cref{sec:barriers} we discuss how irreversibility implies a barrier. In \cref{sec:methods} we discuss methods to analyze irreversibility. Finally, in \cref{sec:ex} we exhibit explicit irreversibility barriers.

\section{Irreversibility of tensors}\label{sec:irr}
We begin by introducing some standard terminology and results regarding tensors and matrix multiplication. Then we discuss a useful notion called the relative exponent of two tensors and we define the irreversibility of a tensor. After that we introduce the monomial versions of these ideas. Finally, we discuss what values the irreducibility can take.

\subsection{Tensors and matrix multiplication}\label{std}  
We begin with some standard terminology and results regarding tensors and matrix multiplication. Standard references for this material are~\cite{burgisser1997algebraic} and \cite{blaser2013fast}.
Let $\FF$ be a field. Everything we discuss works over all fields, except when mentioned otherwise.
An \emph{$n_1 \times n_2 \times n_3$ tensor} (over $\FF$) is a three-dimensional array 
\[
t = (t_{i_1, i_2, i_3})_{i_1 \in [n_1], i_2 \in [n_1], i_3 \in [n_1]}
\]
of field elements $t_{i_1, i_2, i_3} \in \FF$. We denote the set of  all $n_1 \times n_2 \times n_3$ tensors by $\FF^{n_1 \times n_2 \times n_3}$.
The \emph{support of $t$} is defined as the set
\[
\supp(t) \coloneqq \{(i_1, i_2, i_3) \in [n_1] \times [n_2] \times [n_3] : t_{i_1, i_2, i_3} \neq 0\}.
\]
We use $e_{i,j,k}$ to denote the tensor that is zero everywhere except for a one in coordinate $(i,j,k)$.
For any $m_i \times n_i$ matrices $A_i$ over $\FF$ we define the tensor $(A_1, A_2, A_3)\cdot t \in \FF^{m_1 \times m_2 \times m_3}$ by 
\[
((A_1, A_2, A_3)\cdot t)_{j_1, j_2, j_3} = \sum_{i_1, i_2, i_3} (A_1)_{j_1, i_1} (A_2)_{j_2, i_2} (A_3)_{j_3, i_3} \, t_{i_1, i_2, i_3}.
\]
For $t \in \FF^{n_1 \times n_2 \times n_3}$ and~$s \in \FF^{m_1 \times m_2 \times m_3}$
we write~$t \geq s$ and say that \emph{$t$ restricts to~$s$} if there are matrices $A_i$ such that $(A_1, A_2, A_3)\cdot t = s$. We say $t$ and $s$ are \emph{equivalent} if $t \geq s$ and $s \geq t$. We say $t$ and $s$ are \emph{isomorphic} and write $s \cong t$ if $s = (A_1, A_2, A_3)\cdot t$ for invertible matrices $A_i$. We will often simplify the notation and simply write $s = t$ when $s$ and $t$ are isomorphic.

For $n \in \NN$ we define the \emph{rank-$n$ unit tensor} $\langle n \rangle \coloneqq \sum_{i=1}^n e_{i,i,i} \in \FF^{n \times n \times n}$.
A \emph{simple tensor} is a tensor that is an outer product $\sum_{i,j,k} u_i v_j w_k\, e_{i,j,k}$ of three vectors $u$, $v$ and $w$.
The \emph{tensor rank of~$t$} is defined as the smallest $r$ such that $t$ can be written as a sum of $r$ simple tensors, that is, $t = \sum_{\ell=1}^r \sum_{i,j,k} u^{(\ell)}_i v^{(\ell)}_j w^{(\ell)}_k\, e_{i,j,k}$ for vectors $u^{(\ell)}$, $v^{(\ell)}$ and $w^{(\ell)}$.
It is crucial for us, but not hard to see, that we can phrase tensor rank in terms of the restriction preorder and diagonal tensors via~$\rank(t) = \min \{n \in \NN : t \leq \langle n\rangle\}$. Indeed, if $t  = (A_1, A_2, A_3) \cdot \langle r\rangle$, then the columns of the matrices $A_1$, $A_2$ and $A_3$ provide the vectors $u^{(\ell)}$, $v^{(\ell)}$ and $w^{(\ell)}$ for a tensor rank upper bound, and in the other direction such vectors $u^{(\ell)}$, $v^{(\ell)}$ and $w^{(\ell)}$ provide the matrices $A_1$, $A_2$ and $A_3$.
What is nice about this definition is that it naturally gives rise to the following tensor parameter, which is central for our barriers:
The \emph{subrank of~$t$} is defined as~$\subrank(t) \coloneqq \max \{n \in \NN : \langle n \rangle \leq t\}$.

Extending the definition of Kronecker product for matrices, we define the \emph{Kronecker product} $s \otimes t \in \FF^{(m_1n_2) \times (m_2n_2) \times (m_3n_3)}$ for tensors $s \in \FF^{m_1 \times m_2 \times m_3}$ and $t \in \FF^{n_1 \times n_2 \times n_3}$ as the tensor that has coefficients given by all pairwise products of coefficients of $s$ and $t$. Extending the definition of direct sum for matrices, we define the \emph{direct sum} $s \oplus t \in \FF^{(m_1 + n_2) \times (m_2 + n_2) \times (m_3 + n_3)}$ as the tensor obtained by placing $s$ and $t$ as blocks in a block-diagonal tensor.
In long expressions we will often write $t^n$ to denote $t^{\otimes n}$.
The product $\otimes$, the sum $\oplus$ and the diagonal tensors $\langle n\rangle$ behave well with respect to each other. One fact we will use frequently is that the direct sum of $n$ copies of a tensor $t$ is isomorphic to the Kronecker product of the diagonal tensor $\langle n\rangle$ and $t$.

For any tensor $t \in \FF^{n_1 \times n_2 \times n_3}$ we define linear maps $t_1 : \FF^{n_1} \to  \FF^{n_2 \times n_3} : v \mapsto \sum_{i,j,k} v_i t_{i,j,k} e_{j,k}$, 
$t_2 : \FF^{n_2} \to  \FF^{n_1 \times n_3} : v \mapsto \sum_{i,j,k} v_j t_{i,j,k} e_{i,k}$
and
$t_3 : \FF^{n_3} \to  \FF^{n_1 \times n_2} : v \mapsto \sum_{i,j,k} v_k t_{i,j,k} e_{i,j}$. These are called the \emph{flattenings} of $t$. We call each function $t \mapsto \rk(t_i)$ a \emph{flattening rank} of $t$. A basic observation and tool is that each flattening rank is multiplicative under the Kronecker product on tensors, additive under the direct sum on tensors, and monotone under the preorder $\leq$ on tensors.

The \emph{asymptotic rank of~$t$} is defined as
\begin{equation}
\asymprank(t) \coloneqq \lim_{n \to \infty} \rank(t^{\otimes n})^{1/n} = \inf_n \rank(t^{\otimes n})^{1/n}
\end{equation}
and the \emph{asymptotic subrank of~$t$} is defined as
\begin{equation}
\asympsubrank(t) \coloneqq \lim_{n \to \infty} \subrank(t^{\otimes n})^{1/n} = \sup_n \subrank(t^{\otimes n})^{1/n}.
\end{equation} 
We need to argue that the limits in the definitions exist.
If $t \leq \langle n\rangle$ and $s \leq \langle m\rangle$, then we also have $t \otimes s \leq \langle nm\rangle$. It follows from this that tensor rank is sub-multiplicative under $\otimes$. Similarly, subrank is super-multiplicative under $\otimes$. 
Therefore, by Fekete's lemma,\footnote{Fekete's lemma is as follows. Let $r_1, r_2, r_3, \ldots \in \RR_{\geq 0}$ satisfy $r_{n+m} \leq r_n + r_m$ for all $n,m$. Then it holds that $\lim_{n\to\infty} r_n/n = \inf_n r_n/n$. See, e.g.,~\cite[No.~98]{MR1492447}.} the above limits exist and equal the respective infimum and supremum. We note that asymptotic rank and asymptotic subrank may equivalently be defined using \emph{border} rank and \emph{border} subrank, respectively, which are the approximative versions of rank and subrank (e.g., see \cite{blaser2013fast}).

For~$a,b,c \in \NN_{\geq1}$ the \emph{matrix multiplication tensor} $\langle a,b,c\rangle$ is defined as
\begin{equation}
\langle a,b,c \rangle \coloneqq \sum_{i=1}^a \sum_{j=1}^b \sum_{k=1}^c e_{(i,j),\, (j,k),\, (k,i)} \in \FF^{ab \times bc \times ca}.
\end{equation}
To see how $\langle n,n,n\rangle$ encodes multiplication of $n \times n$ matrices, let  $E_{i,j}$ be the $n\times n$ matrix that is all-zero except for a 1 at coordinate $(i,j)$, so that the $E_{i,j}$ form the standard basis of the vector space of $n\times n$ matrices. Note that $E_{i,j} E_{j,k} = E_{i,k}$ and that $\sum_{i=1}^n \sum_{j=1}^n \sum_{k=1}^n A_{(i,j)} B_{(j,k)} E_{i,k} = AB$.
Thus suitable contractions of $\langle n,n,n\rangle$ with the matrices $A$ and $B$ produces the product $AB$.
It is a standard result (\cite[Section 4]{blaser2013fast}) that the tensor rank of $\langle n,n,n\rangle$ equals, up to a constant factor, the arithmetic complexity of multiplying two $n\times n$ matrices. The argument is roughly as follows. Tensor rank upper bounds lead directly to matrix multiplication algorithms by the aforementioned contractions. On the other hand, any arithmetic matrix multiplication algorithm can be made into an arithmetic circuit with an addition layer followed by a bilinear multiplication layer followed by an addition layer, in such a way that the number of bilinear multiplication nodes is at most twice the number of multiplications in the arithmetic algorithm. The blow-up by a factor of two is because we only allow \emph{bilinear multiplications} in our arithmetic circuit. This representation leads directly to a tensor rank upper bound.
As we said before, the \emph{matrix multiplication exponent}~$\omega$ is defined as the infimum over all $\beta$ for which the arithmetic complexity of multiplying two~$n\times n$ matrices is at most $\Oh(n^\beta)$. Matrix multiplication tensors are multiplicative in the sense that $\langle a,b,c\rangle \otimes \langle d,e,f\rangle$ is isomorphic to $\langle ad, be, cf\rangle$.
From the multiplicativity of the matrix multiplication tensors it is easy to derive that $\omega$ is characterized by the asymptotic rank of $\langle 2,2,2\rangle$ as follows.

\begin{proposition}[{see, e.g., \cite[Theorem 5.2]{blaser2013fast}}]\label{prop:omega}
$\omega = \log_2 \asymprank(\langle 2,2,2\rangle)$.
\end{proposition}

The difficulty of determining the asymptotic rank of $\langle2,2,2\rangle$ is to be contrasted with the situation for the asymptotic subrank; to put it in Strassen's words:
 \emph{Unlike the cynic, who according to Oscar Wilde knows the price of everything and the value of nothing, we can determine the asymptotic value of $\langle h,h,h\rangle$ precisely}.
\begin{proposition}[\cite{strassen1988asymptotic}]\label{prop:wecan}
$2 = \log_2 \asympsubrank(\langle 2,2,2\rangle)$.
More precisely, for any $h \in \NN$ it holds that
\begin{equation}\label{wecan}
\asympsubrank(\langle h,h,h\rangle) = h^2.
\end{equation}
\end{proposition}
For completeness of the paper we give a proof of \cref{prop:wecan}. The proof we provide here is based on Salem--Spencer sets and is slightly different from the proof in \cite{strassen1988asymptotic}. However, the two proofs become very similar when the construction of Salem--Spencer sets is unrolled.
\begin{proof}
Since each flattening rank of the diagonal tensor $I_n$ equals $n$ and since the flattening rank of $\langle h,h,h\rangle$ equals $h^2$ we find by multiplicativity and monotonicity of the flattening rank that~$\asympsubrank(\langle h,h,h\rangle) \leq h^2$. We must prove the matching lower bound.
Let $p$ be a large prime. There is a subset $A \subseteq \ZZ/p\ZZ$ (a \emph{Salem--Spencer set} in $\ZZ/p\ZZ$) that is free of nontrivial three-term arithmetic progressions and that has size~$\abs[0]{A} = p^{1-o(1)}$ \cite{MR7405,MR18694}.
The support of the tensor $\langle p,p,p\rangle$ is the set of triples
\[
E = \{((i,j), (j,k), (k,i)) : i,j,k \in [p]\}.
\]
We think of $E$ as a 3-partite 3-uniform hypergraph $E \subseteq V_1 \times V_2 \times V_3$ with vertex sets $V_\ell = [p]\times [p]$.
Consider the sub-hypergraph $F \subseteq E$ induced by the vertex sets
\begin{gather*}
W_1 = \{(x, x+a) \in V_1 : x \in [p], a \in A\}\\
W_2 = \{(x, x+a) \in V_2 : x \in [p], a \in A\}\\
W_3 = \{(x+2a, x) \in V_3: x \in [p], a \in A\}.
\end{gather*}
That is, $F = E \cap (W_1 \times W_2 \times W_3)$. Then 
\[
F = \{((i,j), (j,k), (k,i)) : (i,j,k) = (x, x+a, x+2a) \textnormal{ for some } x \in [p] \textnormal{ and } a \in A\}
\]
Indeed, a priori, $((i,j), (j,k), (k,i)) \in F$ if and only if
\[
(i,j,k) = (x, x+a, x+a+b = x + 2c)
\]
for some $x \in [p]$ and $a,b,c \in A$. Then $a, (a+b)/2, b$ is a three-term arithmetic progression in~$A$, and thus $a = b$ and the claim follows.
We see that the set $F$ has $p \abs[0]{A} = p^{2 - o(1)}$ elements.
We also see that for any triple $((i,j), (j,k), (k,i))$ in $F$, the first vertex~$(i,j)$ determines the second vertex~$(j,k)$ and the third vertex~$(k,i)$, and similarly the second vertex determines the first and third vertex and the third vertex determines the first and second vertex.
In other words, the edges in $F$ are disjoint in all coordinates, that is, they form a matching.
Restrict the tensor~$\langle p,p,p\rangle \in \FF^{V_1 \times V_2 \times V_3}$ to $\FF^{W_1 \times W_2 \times W_3}$ by restricting the coordinates in each direction from~$V_\ell$ to~$W_\ell$. The support of the resulting tensor is the matching $F$. After a suitable permutation of each $W_\ell$, this tensor becomes a diagonal tensor of size $|F| = p^{2 - o(1)}$.
We conclude that for every prime~$p$ we have $\subrank(\langle p,p,p\rangle) \geq p^{2 - o(1)}$. By a simple tensor product argument this implies the claim for all $h \in \NN$. Namely, for every $h \in \NN$ we take a prime $p$ such that $h^n$ is not much bigger than an integer power of $p$, that is, $h^n \geq p^m \geq h^n/p$. Then  $\langle h^n,h^n,h^n\rangle \geq \langle p^m, p^m, p^m\rangle$ and we find that $\subrank(\langle h^n, h^n, h^n\rangle) \geq h^{2n - o(n)}$.
\end{proof}

\begin{remark}\label{rem:strproof}
We remarked before that Strassen's proof of \cref{prop:wecan} in \cite{strassen1988asymptotic} takes a slightly different route. Namely, it makes use of an approximative relaxation of subrank called \emph{border subrank}, which is denoted by $\bordersubrank$. He shows that $\bordersubrank(\langle m,m,m\rangle) \geq \lceil\tfrac34m^2 \rceil$.  (This lower bound, and all Strassen's lower bounds for $\bordersubrank(\langle a,b,c\rangle)$, are in fact tight~\cite{kopparty2020geometric}.) Strassen then relates the border subrank to the asymptotic subrank via a simple polynomial interpolation argument to get that~$\asympsubrank(\langle m,m,m\rangle) = m^2$. 
\end{remark}

\subsection{Relative exponent}
For a clean exposition of our barrier we will use the notion of relative exponent, which we will define in this section. 
This notion is inspired by the notion of rate from information theory and alternatively can be seen as a versatile version of the notion of the asymptotic preorder for tensors of Strassen\footnote{The asymptotic preorder $\asympgeq$ is defined as follows: $s \asympgeq t$ if $s^{\otimes n + o(n)} \geq t^{\otimes n}$ when $n \to \infty$.}. In the context of tensors, the relative exponent previously appeared in \cite{yu2014obtaining},\cite{vrana2013asymptotic} and \cite{christandl2016asymptotic}.

To avoid technicalities, we will from now on, without further mentioning, only consider tensors that are not of tensor rank one or zero. This assumption is explicitly used in the proof of \cref{basic}(i) and its applications.

\begin{definition}
For two tensors $t \in \FF^{n_1 \times n_2 \times n_3}$ and $s \in \FF^{m_1 \times m_2 \times m_3}$ we define the \emph{relative exponent from $t$ to $s$}~as
\begin{align}
\relexp(t, s) &\coloneqq \lim_{n\to \infty} \tfrac1n \min\{ m \in \NN_{\geq 1} : t^{\otimes m} \geq s^{\otimes n} \}\\
&= \sup_n \tfrac1n \min\{ m \in \NN_{\geq1} : t^{\otimes m} \geq s^{\otimes n} \}.
\end{align}
\end{definition}
The limit is a supremum by Fekete's lemma.
Let us briefly relate the relative exponent to the basic notions and results stated earlier.
The reader verifies directly that the identities
\begin{gather}
\relexp(\langle 2\rangle, t) = \log_2 \asymprank(t)\\
\relexp(t, \langle 2\rangle) = 1/(\log_2 \asympsubrank(t))
\end{gather}
hold by approximating integers by powers of 2, or see \cite[Proposition 1.1.16]{christandl2016asymptotic} for a proof.
From the fact that $\omega \coloneqq \log_2 \asymprank(\langle 2,2,2\rangle)$ (\cref{prop:omega}) it follows that
\begin{equation}
\relexp(\langle 2\rangle, \langle2,2,2\rangle) = \omega.
\end{equation}
We know from \eqref{wecan} that $\asympsubrank(\langle 2,2,2\rangle) = 4$ and so
\begin{equation}\label{subexp}
\relexp(\langle 2,2,2 \rangle, \langle 2\rangle) = \tfrac12.
\end{equation}
The relative exponent has the following two basic properties.
\begin{proposition}\label{basic} Let $s$, $t$ and $u$ be tensors.
\begin{enumerate}[label=\upshape(\roman*)]
\item $\relexp(t, t) = 1$.
\item $\relexp(s, t)\relexp(t, u) \geq \relexp(s,u)$ \quad\textnormal{(triangle inequality)}.
\end{enumerate}
\end{proposition}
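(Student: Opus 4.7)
The plan is to verify both properties essentially straight from the definition, using asymptotic rank as the monotone quantity that forces the trivial inequality in~(i) to be tight, and using a chaining argument for~(ii).

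For part~(i), the upper bound $\relexp(t,t) \leq 1$ is immediate from the definition: since $t^{\otimes n} \geq t^{\otimes n}$, the set $\{m \in \NN : t^{\otimes m} \geq t^{\otimes n}\}$ contains $n$, so $\min\{m : t^{\otimes m} \geq t^{\otimes n}\}/n \leq 1$ for every~$n$. For the matching lower bound I would argue that whenever $t^{\otimes m} \geq t^{\otimes n}$ one must have $m \geq n$. This follows by applying any quantity that is monotone under restriction and multiplicative under $\otimes$ and which takes a value strictly larger than~$1$ on~$t$. Asymptotic rank $\asymprank$ fits: restriction implies $\asymprank(t^{\otimes m}) \geq \asymprank(t^{\otimes n})$, multiplicativity gives $\asymprank(t)^m \geq \asymprank(t)^n$, and $\asymprank(t) > 1$ holds because the standing assumption excludes simple tensors (a concise non-simple tensor has a flattening of rank at least~$2$, and flattening rank lower bounds asymptotic rank). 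Hence $m \geq n$, so $\min\{m : t^{\otimes m} \geq t^{\otimes n}\}/n \geq 1$ for every~$n$.

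For part~(ii), set $\alpha = \relexp(s,t)$ and $\beta = \relexp(t,u)$, and for each $n,p \in \NN$ define
\begin{equation}
m_n \coloneqq \min\{m : s^{\otimes m} \geq t^{\otimes n}\}, \qquad q_p \coloneqq \min\{q : t^{\otimes q} \geq u^{\otimes p}\}.
\end{equation}
By the supremum characterisation of the relative exponent we have $m_n/n \leq \alpha$ for all $n$ and $q_p/p \leq \beta$ for all $p$. Chaining the two restrictions,
\begin{equation}
s^{\otimes m_{q_p}} \geq t^{\otimes q_p} \geq u^{\otimes p},
\end{equation}
so $\min\{m : s^{\otimes m} \geq u^{\otimes p}\} \leq m_{q_p}$. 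Dividing by $p$ and rewriting as
\begin{equation}
\frac{m_{q_p}}{p} \,=\, \frac{m_{q_p}}{q_p} \cdot \frac{q_p}{p} \,\leq\, \alpha \cdot \frac{q_p}{p},
\end{equation}
and then taking the supremum over $p$ on both sides gives $\relexp(s,u) \leq \alpha \beta$, which is the claimed triangle inequality.

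The only genuinely non-formal ingredient is the strict inequality $\asymprank(t) > 1$ for non-simple $t$ used in~(i); everything else is bookkeeping with $\sup$'s and Fekete's lemma (which is already invoked to justify that the defining limit equals the supremum). I expect no further obstacle.
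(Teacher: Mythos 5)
Your proof is correct. The paper itself does not supply a proof here (it just says ``the reader verifies directly''), and your argument is the natural verification that the authors clearly had in mind: part (ii) is a straightforward composition of restrictions combined with the $\sup$ characterisation of $\relexp$, and part (i) reduces to showing that $t^{\otimes m} \geq t^{\otimes n}$ forces $m \geq n$, for which you correctly use a $\geq$-monotone, $\otimes$-multiplicative quantity (asymptotic rank) that is strictly larger than $1$ on every admissible tensor.

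Two minor points worth tidying up. First, in the justification of $\asymprank(t) > 1$ you invoke a ``concise'' non-simple tensor; conciseness is neither assumed nor needed. The standing assumption that $t$ is not of the form $u \otimes v \otimes w$ already forces at least one flattening $t_i$ to have matrix rank $\geq 2$ (if all three flattenings had rank $\leq 1$, $t$ would be simple), and since flattening rank is multiplicative under $\otimes$ and lower bounds tensor rank, $\asymprank(t) \geq \rk(t_i) \geq 2$. Second, in part (ii) you implicitly assume $q_p < \infty$; if $\relexp(t,u) = \infty$ (which the paper's conventions do not strictly preclude, e.g.\ for $t$ with $\asympsubrank(t) = 1$), the inequality is vacuously true since $\relexp(s,t) > 0$, so your chaining argument can simply be restricted to the case where all quantities are finite. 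Neither issue affects the substance of the proof.
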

\begin{proof}
(i) Clearly $\omega(t,t) \leq 1$ by reflexivity of the restriction preorder $\geq$. 
Since $t$ is not of rank 1, we can flatten $t$ into a matrix in one of the three directions, so that the matrix rank is some number $r > 1$. It follows from multiplicativity of matrix rank, that if $t^{\otimes m} \geq t^{\otimes n}$, then $r^m \geq r^n$ and so $m \geq n$. This implies the claim.
(ii) follows from the transitivity of the restriction preorder~$\geq$.
\end{proof}

\subsection{Irreversibility.} 
Our barrier framework relies crucially on the irreversibility of a tensor, a new notion that we define now.

\begin{definition}
We define the \emph{irreversibility} of a tensor $t$ as the product of the relative exponent from $\langle 2\rangle$ to $t$ and the relative exponent from $t$ to $\langle 2\rangle$, i.e.
\begin{equation}
\irr(t) \coloneqq \relexp(\langle 2\rangle, t) \relexp(t, \langle2\rangle).
\end{equation}
Thus $\irr(t)$ measures the extent to which the asymptotic conversion from $\langle 2\rangle$ to $t$ is irreversible, explaining the name.
Equivalently, the irreversibility is the ratio of the logarithms of the asymptotic rank and the asymptotic subrank, i.e.
\begin{equation}
\irr(t) = \frac{\log_2 \asymprank(t)}{\log_2 \asympsubrank(t)}.
\end{equation}
\end{definition}

\begin{proposition}\label{biggerthanone} For any tensor $t$ it holds that
\begin{equation}
\irr(t) \geq 1.
\end{equation}
\end{proposition}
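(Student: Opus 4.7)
The statement is really a one-line consequence of the machinery already set up, so my plan is to read it off directly from \cref{basic}. The only substance in the proposition is matching definitions to the right specialization of the triangle inequality; there is no genuine obstacle, and the proof does not require invoking anything about tensors beyond the two formal properties already listed.

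Concretely, the plan is as follows. First I would instantiate the triangle inequality \cref{basic}(ii) with $s = \langle 2 \rangle$, arbitrary middle tensor $t$, and $u = \langle 2 \rangle$. This yields
\begin{equation}
\relexp(\langle 2\rangle, t)\,\relexp(t, \langle 2\rangle) \;\geq\; \relexp(\langle 2 \rangle, \langle 2 \rangle).
\end{equation}
Next I would apply \cref{basic}(i), which gives $\relexp(\langle 2 \rangle, \langle 2 \rangle) = 1$. Finally I would recognize the left-hand side as $\irr(t)$ by the defining equation
\begin{equation}
\irr(t) \;=\; \relexp(\langle 2\rangle, t)\,\relexp(t, \langle 2\rangle),
\end{equation}
concluding $\irr(t) \geq 1$.

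The only thing to double-check is that the assumption excluding tensors of the form $u \otimes v \otimes w$ (imposed earlier in this section) guarantees that the relative exponents $\relexp(\langle 2 \rangle, t)$ and $\relexp(t, \langle 2 \rangle)$ are both finite and positive, so that the product is well-defined and the inequality is non-vacuous. This amounts to noting that $\asymprank(t)$ and $\asympsubrank(t)$ are both strictly between $1$ and $\infty$ for such $t$, which is standard. Beyond this sanity check, there is nothing to do; the main step is simply the triangle inequality applied with both endpoints equal to $\langle 2 \rangle$.
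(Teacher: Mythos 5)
Your proof is correct and matches the paper's own argument: instantiate the triangle inequality (\cref{basic}(ii)) with endpoints $\langle 2 \rangle$ and middle tensor $t$, then apply \cref{basic}(i) to get $\relexp(\langle 2\rangle, \langle 2\rangle) = 1$. The added sanity check about finiteness is reasonable but not part of the paper's (one-line) proof; otherwise the approaches are identical.
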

\begin{proof}
From the basic properties of the relative exponent (\cref{basic}) it follows directly that the inequality~$\irr(t) = \relexp(\langle2\rangle, t) \relexp(t, \langle 2\rangle) \geq \relexp(\langle2\rangle, \langle2\rangle) = 1$ holds.
\end{proof}

\begin{definition} We call a tensor $t$ \emph{reversible} if $\irr(t) = 1$ and \emph{irreversible} otherwise (in which case it holds that $\irr(t) > 1$ by \cref{biggerthanone}).
\end{definition}
For example, for any $n \in \NN$ the diagonal tensor $\langle n\rangle = \sum_{i=1}^n e_{i,i,i}$ is reversible. In fact, every reversible tensor $t$, \emph{that we know of}, is equivalent to $\langle n\rangle$ for some $n$, in the sense that $\langle n\rangle \leq t \leq \langle n\rangle$. In other words, we do not know whether every reversible tensor $t$ is equivalent to $\langle n\rangle$ for some $n$.

This question is closely related to whether the matrix multiplication exponent $\omega$ equals 2.
For the matrix multiplication tensor $\langle 2,2,2\rangle$ we have that $2 \irr(\langle2,2,2\rangle) = \omega$ (using \eqref{subexp}). Thus~$\omega = 2$ if and only if $\langle 2,2,2 \rangle$ is reversible. (In fact, for any $n \in \NN$ it holds that $\omega = 2$ if and only if~$\langle n,n,n\rangle$ is reversible.)
As we will see in \cref{sec:barriers}, this is ultimately the source of our barrier. %

Irreversible tensors do in fact exist.
For example, the tensor $W = e_{0,0,1} + e_{0,1,0} + e_{1,0,0}$ is irreversible. Namely, it is known that $\log_2 \asymprank(W) = 1$ and that $\log_2 \asympsubrank(W) = h(1/3) = 0.918..$ \cite[Theorem 6.7]{strassen1991degeneration}, so~$\irr(W) = 1.088.. > 1$. 
In \cref{sec:ex} we will compute lower bounds on the irreversibility of the small and big Coppersmith--Winograd tensors (which play a crucial role in the best upper bounds on $\omega$).

\subsection{Monomial relative exponent and monomial irreversibility}\label{monomial}
The following restricted version of relative exponent and irreversibility will be relevant.
A \emph{generalized sub-permutation matrix} is a matrix for which every row and every column has at most one nonzero coefficient.
For two tensors $t \in \FF^{n_1 \times n_2 \times n_3}$ and~$s \in \FF^{m_1 \times m_2 \times m_3}$ we write~$t \mongeq s$ and say $t$ \emph{monomially restricts} to~$s$ if there are linear maps $A_i : \FF^{n_i} \to \FF^{m_i}$, the corresponding matrices of which are generalized sub-permutation matrices in the standard basis, such that $(A_1, A_2, A_3)\cdot t = s$ \cite[Section~6]{strassen1987relative}. 
Thus $s$ is a monomial restriction of $t$ if we can obtain $s$ from $t$ by rescaling (also to zero) and permuting the slices of $t$. Monomial restriction plays a central role in the existing upper bounds on $\omega$.

Replacing the preorder $\geq$ by $\mongeq$ in \cref{sec:irr} gives the notions of monomial subrank~$\monsubrank$, monomial asymptotic subrank~$\monasympsubrank$ and monomial relative exponent $\monrelexp$. 
(For simplicity we will use monomial restriction here, but our results will also hold with $\mongeq$ replaced by monomial degeneration $\combdegengeq$ defined in \cite[Section 6]{strassen1987relative}. Monomial degeneration is the approximative version of monomial restriction. It is reflexive and transitive like monomial restriction.)
Note that the notions $\monsubrank$ and~$\monasympsubrank$ only depend on the support of the tensor, and not on the particular values of the nonzero coefficients.
We define the monomial irreversibility $\monirr(t)$ of $t$ as the product of the (normal) relative exponent from $\langle2\rangle$ to $t$ and the monomial relative exponent from $t$ to $\langle 2\rangle$,
\begin{equation}
\monirr(t) \coloneqq \relexp(\langle2\rangle, t)\monrelexp(t, \langle2\rangle).
\end{equation}
Equivalently, we have
\begin{equation}
\monirr(t) = \frac{\log_2 \asymprank(t)}{\log_2 \monasympsubrank(t)}.
\end{equation}
We stress that the monomial irreversibility uses the monomial asymptotic subrank, but the \emph{regular} asymptotic rank. This is in fact what is used in practice.
We note that monomial (asymptotic) rank is a useless concept, since any monomial restriction of a unit tensor is again a unit tensor.
In other words, using the monomial asymptotic rank is too restrictive and gives bad algorithms from the start. In particular, monomial irreversibility depends on the tensor and not only on its support.
\begin{proposition}\label{basicmonx} Let $s$, $t$ and $u$ be tensors.
\begin{enumerate}[label=\upshape(\roman*)]
\item $\monrelexp(t, t) = 1$.
\item $\monrelexp(s, t)\monrelexp(t, u) \geq \monrelexp(s,u)$ \quad\textnormal{(triangle inequality)}.
\item $\monrelexp(s, t) \geq \relexp(s,t)$.
\item $\monirr(t) \geq \irr(t)$.
\end{enumerate}
\end{proposition}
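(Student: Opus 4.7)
The plan is to prove (iii) first and then deduce (i) and (iv) as immediate consequences, leaving (ii) as a separate transitivity argument mirroring the proof of \cref{basic}. For (iii), the key observation is that a generalised sub-permutation matrix is in particular a linear map, so $s^{\otimes m} \mongeq t^{\otimes n}$ implies $s^{\otimes m} \geq t^{\otimes n}$. Consequently
\begin{equation*}
\min\{m \in \NN : s^{\otimes m} \mongeq t^{\otimes n}\} \geq \min\{m \in \NN : s^{\otimes m} \geq t^{\otimes n}\},
\end{equation*}
and dividing by $n$ and passing to the limit gives $\monrelexp(s,t) \geq \relexp(s,t)$. Part (i) then follows by combining the trivial upper bound $\monrelexp(t,t)\leq 1$, witnessed by the identity maps (which are permutation, hence generalised sub-permutation, matrices), with the lower bound $\monrelexp(t,t)\geq \relexp(t,t) = 1$ coming from (iii) and \cref{basic}. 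Part (iv) is immediate from (iii) applied to the right-hand factor in $\monirr(t) = \relexp(\langle 2\rangle, t)\monrelexp(t,\langle 2\rangle)$.

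For (ii), I would mimic the proof of \cref{basic}, the one ingredient to verify being that the class of generalised sub-permutation matrices is closed under both composition and Kronecker product. The first is true because the product of two sub-permutation matrices is again a sub-permutation matrix (each row and column of the product still contains at most one nonzero entry), with the diagonal rescalings simply multiplying; the second is the standard fact that Kronecker products of (sub-)permutation matrices are again (sub-)permutation matrices. Granting this, monomial restriction is transitive and compatible with tensor powers, so writing $f_1(n) = \min\{m : s^{\otimes m}\mongeq t^{\otimes n}\}$, $f_2(k) = \min\{l : t^{\otimes l}\mongeq u^{\otimes k}\}$, and $f_3(k) = \min\{m : s^{\otimes m}\mongeq u^{\otimes k}\}$, transitivity yields $f_3(k)\leq f_1(f_2(k))$. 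Dividing by $k$, factoring as $(f_1(f_2(k))/f_2(k))\cdot(f_2(k)/k)$, and letting $k\to\infty$ (so that $f_2(k)\to\infty$, which uses the standing assumption that $u$ is not a simple tensor) gives $\monrelexp(s,u) \leq \monrelexp(s,t)\monrelexp(t,u)$.

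The only real obstacle is the closure check for generalised sub-permutation matrices, which is a one-line matrix-entry calculation; everything else is a formal parallel of the arguments behind \cref{basic}.
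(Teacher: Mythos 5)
The paper states this proposition without proof (as with \cref{basic}, it is left to the reader to verify directly), so there is no paper argument to compare against. Your proof is correct and is exactly the intended direct verification: (iii) follows because monomial restriction is a special case of restriction, (i) and (iv) then drop out, and (ii) reduces to checking that generalised sub-permutation matrices are closed under composition and Kronecker product, after which the Fekete-type limit argument goes through (including the observation that the standing assumption that $u$ is not a simple tensor forces $f_2(k)\to\infty$).
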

\begin{proof}
The first two statements follow from reflexivity and transitivity of the monomial restriction preorder $\mongeq$, just like in the proof of \cref{basic}.
The other two statements follow from the fact that $s \mongeq t$ implies $s \geq t$.
\end{proof}

\begin{definition}
We call a tensor $t$ \emph{monomially reversible} if $\monirr(t) = 1$ and \emph{monomially irreversible} otherwise (in which case it holds that $\monirr(t) > 1$ by \cref{basicmonx} (iv) and \cref{biggerthanone}).
\end{definition}
There exist tensors that are reversible and monomially irreversible. We give an example. 

\begin{example}\label{ste}
For any finite group $G$, we define the tensor $\langle G\rangle_\FF \in \FF^{G \times G \times G}$ by 
\[
\langle G\rangle_\FF \coloneqq \sum_{g,h \in G} e_{g,h,gh}.
\] 
The tensor $\langle G\rangle_\FF$ is often referred to as the \emph{structure tensor} of the group algebra $\FF[G]$. The group algebra $\FF[G]$ is the vector space $\FF^G$ that has a bilinear multiplication operation defined on it by setting $e_g \cdot e_h = e_{gh}$ for the basis elements $\{e_g : g \in G\}$ and extending bilinearly to~$\FF[G]$. Thus~$\langle G\rangle_\FF$ precisely encodes this multiplication.
We stress that in the notation~$\langle G\rangle_\FF$ the parameter~$G$ determines the support and the parameter~$\FF$ denotes over which field we consider the tensor. Even though all coefficients of~$\langle G\rangle_\FF$ are 0 or~1, the choice of base field~$\FF$ determines what linear maps are allowed in the restriction preorder $\geq$ and thus the notion of equivalence and the notion of isomorphism. Crucially, for the monomial restriction preorder $\mongeq$ applied to tensors with coefficients that are only 0 or 1, the choice of base field does not make a difference.

We consider the structure tensor
\begin{equation}
\langle \ZZ/3\ZZ \rangle_\FF = e_{0,0,0} + e_{0,1,1} + e_{1,0,1} + e_{2,0,2} + e_{0,2,2} + e_{1,1,2} + e_{1,2,0} + e_{2,1,0} + e_{2,2,1}.
\end{equation}
First of all, $\langle \ZZ/3\ZZ \rangle_\CC$ is isomorphic to the diagonal tensor $\langle 3 \rangle$. In general, for any finite abelian group $A$, the structure tensor $\langle A\rangle_\CC$ is isomorphic to the diagonal tensor $\langle|A|\rangle$, using the Fourier transform on $A$ to diagonalize the multiplication (see, e.g., \cite{MR3631613} or \cite{christandl2017universal} for a discussion of this).
Therefore, $\asymprank(\langle \ZZ/3\ZZ \rangle_\CC) = 3$ and $\asympsubrank(\langle \ZZ/3\ZZ \rangle_\CC) = 3$ and hence $\langle \ZZ/3\ZZ\rangle_\CC$ is reversible.

In fact, for general $\FF$, we have $\asymprank(\langle \ZZ/3\ZZ \rangle_\FF) = 3$ because there is a monomial degeneration from the restricted polynomial multiplication tensor to $\langle \ZZ/3\ZZ \rangle_\FF$, see \cite[Theorem 4.16]{christandl2017universal}.  On the other hand, it is known that $\monasympsubrank(\langle \ZZ/3\ZZ \rangle_\FF) = 2.755..$. This is proven in \cite{MR3583358, tao}, see also~\cite{christandl2017universal} for the connection to \cite{strassen1991degeneration}. It follows for any field $\FF$ that 
$\monirr(\langle \ZZ/3\ZZ \rangle_\FF) = 1.08..$  and that $\langle \ZZ/3\ZZ\rangle_\FF$ is monomially irreversible.
\end{example}

Regarding matrix multiplication, Strassen's construction for \eqref{subexp} (see also our proof of \cref{prop:wecan}) in fact shows that
\begin{equation}
\monrelexp(\langle2,2,2\rangle, \langle2\rangle) = \tfrac12.
\end{equation}

\subsection{Upper bounds on irreversibility}\label{balanced}
We finish this section by discussing the possible values that the irreversibility can take, as a first step towards a systematic understanding of where we may find reversible tensors or almost reversible tensors. We will see that, surprisingly at first sight, the matrix multiplication exponent $\omega$ provides bounds on the irreversibility of arbitrary tensors. 

First of all, if~$\asympsubrank(t) \leq 1$, then~$\irr(t) = \infty$. Thus, a priori, the irreversibility of a tensor could be any number in $[1, \infty]$.
Recall that we defined the flattenings of any tensor $t \in \FF^{n_1 \times n_2 \times n_3}$ as the linear maps $t_1 : \FF^{n_1} \to  \FF^{n_2 \times n_3}$, $t_2 : \FF^{n_2} \to  \FF^{n_1 \times n_3}$ and $t_3 : \FF^{n_3} \to  \FF^{n_1 \times n_2}$ that are naturally associated to $t$.
For any tensor $t$ that has a flattening with rank (as a linear map) at most~1 it holds that $\asympsubrank(t) \leq 1$. For example, $\asympsubrank(\langle n,1,1\rangle) = 1$.  If none of the flattening ranks is at most~1, then the asymptotic subrank is at least some absolute constant $c > 1$, as follows from~\cite[Lemma~3.7]{strassen1988asymptotic}. Thus in that case the irreversibility is finite. On the other hand, for any $t \in \FF^{n \times n \times n}$ the upper bound $\asymprank(t) \leq \rank(t) \leq n^2$ holds. We conclude that, for $t \in \FF^{n \times n \times n}$,
if any flattening rank is at most 1, then $\irr(t) = \infty$, and otherwise
\[
1 \leq \irr(t) \leq d \log_2 n
\]
for some absolute constant $d$. This gives us an idea of the order of magnitude of the irreversibility.
We will refine this upper bound in the rest of this discussion.

As the first step towards gaining more control on the possible values of the irreversibility, we note that for asymptotic rank we have the following general upper bound.

\begin{proposition}[{\cite[Proposition~3.6]{strassen1988asymptotic}}]\label{arub}
For $t \in \FF^{n \times n \times n}$ it holds that $\asymprank(t) \leq  n^{2\omega/3}$.
\end{proposition}
\begin{proof}
We give a  sketch of the argument. For any tensor $t \in \FF^{n \times n \times n}$ it holds that $t \leq \langle n,n,1\rangle$ and $t \leq \langle n,1,n\rangle$ and $t \leq \langle 1, n, n\rangle$. By multiplying these inequalities, it follows that $t^{\otimes 3} \leq \langle n^2, n^2, n^2\rangle$. It is not hard to see that therefore $\asymprank(t) \leq \asymprank(\langle n,n,n\rangle)^{2/3}  = n^{2\omega /3}$. 
\end{proof}

It is possible that $\omega = 2$, in which case it follows from \cref{arub} that $\asymprank(t) \leq n^{4/3}$ for every $t \in \FF^{n \times n \times n}$.
More generally, it is possible that for every tensor $t \in \FF^{n \times n \times n}$ we have the highly non-trivial upper bound $\asymprank(t) \leq n$, see \cite[Problem 15.5]{burgisser1997algebraic}. Strassen conjectured this to be true for a subset of all tensors called \emph{tight} tensors \cite[Conjecture 5.3]{MR1341854}.

\cref{arub} gives us some control over the possible values of the irreversibility. As the next step in that direction we must understand how small the asymptotic subrank can be.
First, we give an example of a tensor for which the asymptotic subrank and the asymptotic rank are relatively far apart.

\begin{example}\label{ex:notbalanced}
Let $t \in \FF^{n \times n \times n}$ be the tensor $t = e_{1, n, 1} + e_{n, 1, 1} + e_{2, n, 2} + e_{n, 2, 2} + \cdots + e_{n, n, n}$. Then for $n\geq 5$ it is known~\cite[Equation~(6.19)]{strassen1991degeneration} that~$\asympsubrank(t) = 2\sqrt{n-1}$ while $\asymprank(t) = n$ and so we have that $\irr(t)$ goes to 2 (from below) when $n$ goes to infinity.
\end{example}

For the rest of the discussion we require our tensors to satisfy a property called \emph{balanced}. This property was introduced in~\cite[page 121]{strassen1988asymptotic}.
Not all tensors are balanced, but in any tensor space $\FF^{n \times n \times n}$ over an algebraically closed field $\FF$, being balanced is a generic condition in the sense of algebraic geometry. In particular,~almost all elements in $\FF^{n \times n \times n}$ are balanced.
We define a tensor $t \in \FF^{n_1 \times n_2 \times n_3}$ to be \emph{balanced} if the three flattenings 
that we defined before are full-rank as linear maps and for each $i \in [3]$ there is an element $v \in \FF^{n}$ such that the element~$t_i(v)$ has full rank (as a matrix). Balanced tensors are called \emph{1-generic tensors} in~\cite{LANDSBERG2017333}. An example of a non-balanced tensor is the tensor in \cref{ex:notbalanced}. An example of a balanced tensor is the matrix multiplication tensor $\langle n,n,n\rangle$. 
\begin{proposition}\label{prop:ub3}
Let $t \in \FF^{n \times n \times n}$ be balanced. Then
\[
1 \leq \irr(t) \leq \omega.
\]
\end{proposition}
\begin{proof}
The claim follows from the following two ingredients. The first ingredient is the  general upper bound $\asymprank(t) \leq n^{2 \omega/3}$ from \cref{arub}.
The second ingredient is the lower bound
$\asympsubrank(t) \geq n^{2/3}$ \cite[Proposition~3.6]{strassen1988asymptotic} for balanced tensors. We give a sketch of the argument.
From the balancedness assumption it follows that $\langle n, 1, 1\rangle \leq t$ and $\langle 1, n, 1\rangle \leq n$ and $\langle 1, 1, n\rangle \leq t$. By multiplying these inequalities we get $\langle n,n,n\rangle \leq t^{\otimes 3}$. Therefore, using \cref{prop:wecan}, we have $n^2 \leq \asympsubrank(\langle n,n,n\rangle) \leq \asympsubrank(t)^3$.
\end{proof}

Again, it is possible that $\omega = 2$, in which case it follows from \cref{prop:ub3} that $1 \leq \irr(t) \leq 2$ for balanced $t$.
More generally, it is possible that for every tensor $t \in \FF^{n \times n \times n}$ we have the highly non-trivial upper bound $\asymprank(t) \leq n$. In that case we have the following bounds.

\begin{proposition}\label{prop:ub4}
Let $t \in \FF^{n \times n \times n}$ be balanced and satisfy $\asymprank(t) \leq n$. Then
\[
1 \leq \irr(t) \leq 1.5.
\]
\end{proposition}
\begin{proof}
The claim follows directly from combining $\asympsubrank(t) \geq n^{2/3}$, which follows from balancedness as we saw in the proof of \cref{prop:ub3}, and the assumption $\asymprank(t) \leq n$.
\end{proof}

\begin{example}
The upper bound in \cref{prop:ub4} is tight. Namely, let $t = \cw_q$ be the small Coppersmith--Winograd tensor with parameter $q$. We will see from \cref{cweasy} that $\irr(t) \rightarrow 1.5$ when $q$ goes to infinity.
\end{example}

\section{Irreversibility implies barriers}\label{sec:barriers}

With the new notion of irreversibility available, we present a barrier for approaches to upper bound $\omega$ via an intermediate tensor $t$.
As we have discussed before, all recent successful upper bounds on $\omega$ have been obtained with constructions via an intermediate tensor. The results in this section tell us which tensors not to use as intermediate tensors and what necessary quality a good intermediate tensor has.

\subsection{The irreversibility barrier}\label{subsec:irr}
For any tensor $t$ the inequality
\begin{equation}
\relexp(\langle 2\rangle, t) \relexp(t, \langle2,2,2\rangle) \geq \omega
\end{equation}
holds
by the triangle inequality. Any such approach to upper bound $\omega$ respects the following barrier in terms of the irreversibility $\irr(t)$ of $t$.
\begin{theorem}\label{th1} For any tensor $t$ it holds that
\begin{equation}
\relexp(\langle 2\rangle, t) \relexp(t, \langle 2,2,2\rangle) \geq 2 \irr(t).
\end{equation}
\end{theorem}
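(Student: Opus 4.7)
The plan is to unfold the definition of $\irr(t)$ and reduce the claim to an application of the triangle inequality combined with the known identity $\relexp(\langle 2,2,2\rangle, \langle 2\rangle) = \tfrac12$ from \eqref{subexp} (which is the reformulation of Strassen's result $\asympsubrank(\langle h,h,h\rangle) = h^2$).

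Step one: expand the right-hand side of the target inequality using the definition $\irr(t) = \relexp(\langle 2\rangle, t)\,\relexp(t, \langle 2\rangle)$. The factor $\relexp(\langle 2\rangle, t)$ appears on both sides, and by our standing assumption that $t$ is not a product of three vectors we have $\relexp(\langle 2\rangle, t) > 0$. Dividing through by this factor reduces the claim to the simpler inequality
\[
\relexp(t, \langle 2,2,2\rangle) \geq 2\,\relexp(t, \langle 2\rangle).
\]

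Step two: apply the triangle inequality of \cref{basic}(ii) to the chain $t \rightsquigarrow \langle 2,2,2\rangle \rightsquigarrow \langle 2\rangle$, which yields
\[
\relexp(t, \langle 2,2,2\rangle)\,\relexp(\langle 2,2,2\rangle, \langle 2\rangle) \geq \relexp(t, \langle 2\rangle).
\]
Substituting $\relexp(\langle 2,2,2\rangle, \langle 2\rangle) = \tfrac12$ from \eqref{subexp} and multiplying both sides by $2$ gives precisely the reduced inequality from step one.

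Step three: multiply the reduced inequality through by $\relexp(\langle 2\rangle, t)$ to recover the original statement. The main (and only) substantive ingredient is Strassen's identity encoded in \eqref{subexp}; beyond that, the proof is a one-line triangle-inequality computation, so there is no real obstacle. Intuitively, the argument is exactly the picture from the introduction: extending the chain \eqref{chain} by the free half-cost conversion $\langle 2,2,2\rangle \to \langle 2\rangle$ produces a round trip from $\langle 2\rangle$ back to itself, whose total cost must be at least $\irr(t)$ by definition.
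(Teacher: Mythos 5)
Your proof is correct and follows essentially the same route as the paper: the key step in both is the triangle inequality applied to the chain $t \rightsquigarrow \langle 2,2,2\rangle \rightsquigarrow \langle 2\rangle$ together with $\relexp(\langle 2,2,2\rangle, \langle 2\rangle) = \tfrac12$ from \eqref{subexp}. The only cosmetic difference is that you divide out the common factor $\relexp(\langle 2\rangle, t)$ up front, whereas the paper keeps it in the chain throughout; the mathematical content is identical.
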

\begin{proof} By the triangle inequality (\cref{basic}),
\begin{equation}
\relexp(\langle 2\rangle, t) \relexp(t, \langle 2,2,2\rangle) \relexp(\langle2,2,2\rangle, \langle2\rangle) \geq \relexp(\langle 2\rangle, t) \relexp(t, \langle 2\rangle) = \irr(t).
\end{equation}
Therefore, using the fact $\omega(\langle 2,2,2\rangle, \langle2\rangle) = \tfrac12$ from \eqref{subexp}, we have
\begin{equation}
\relexp(\langle 2\rangle, t) \relexp(t, \langle 2,2,2\rangle) \geq \frac{\irr(t)}{ \relexp(\langle2,2,2\rangle, \langle2\rangle)} = 2 \irr(t).
\end{equation}
This proves the claim.
\end{proof}

\cref{th1}, in particular, implies that if $\irr(t) > 1$, then $\relexp(\langle 2\rangle, t) \relexp(t, \langle2,2,2\rangle) > 2$. In other words,~we cannot prove $\omega = 2$ via an irreversible intermediate tensor. However, it is possible that there exists a \emph{sequence} of irreversible intermediate tensors with irreversibility converging to 1 that can be used to prove $\omega = 2$.

To conclude, the barrier just introduced describes what quality makes a tensor a good intermediate tensor, or put differently what intermediate tensors definitely not to use when we want to prove good upper bounds on $\omega$. Of course the barrier does not tell us explicitly which intermediate tensor to pick, but it provides a strong \emph{heuristic} of what to look for, namely low asymptotic rank and high asymptotic subrank. 

\subsection{Better barriers when the method has more structure} 
The barrier discussed in \cref{subsec:irr} does not tell the full story, and we will now discuss the natural and more subtle continuation. Namely, not only will we lose the game when the intermediate tensor is irreversible, we lose \emph{even more} when we use this intermediate tensor in a catalytic fashion, meaning that our algorithm is obtained from transforming a diagonal tensor to the tensor product of a diagonal tensor and a matrix multiplication tensor (via the intermediate tensor). We will explain this more, but for now it is important to know that this strategy is commonly used in the literature (with great success). However, as we will see in this section, such a catalytic approach boosts the previous barrier even more. Thus this section gives rise to a more precise heuristic which says that, while looking for intermediate tensors with small irreversibility, we may at the same time want to think about intermediate tensors that require less use of catalysis.

Catalysis in the general context of tensors is the phenomenon that for tensors $s$, $t$ and $u$, the inequality~$s \geq t$ may be false, while the inequality $s \otimes u \geq t \otimes u$ may be true. The latter inequality is called \emph{catalytic} with the tensor $u$ acting as a \emph{catalyst}. 

Catalysis is widely used in matrix multiplication algorithms albeit not under this explicit terminology.
We will now discuss more quantitatively what we mean when we say that an approach to upper bound $\omega$ uses catalysis.
Without loss of generality we may impose that the final step of any construction of a matrix multiplication algorithm is an application of the Schönhage $\tau$-theorem. 
The Schönhage~$\tau$-theorem (Strassen's general version \cite{strassen1988asymptotic}) says that
\begin{equation}
\asymprank \bigl( \bigoplus_{i=1}^q \langle a_i, b_i, c_i \rangle \bigr) \geq \sum_{i=1}^q (a_i b_i c_i)^{\omega/3}.
\end{equation}
In particular, for $a_i = b_i = c_i = a$, it holds that
\begin{equation}
\asymprank \bigl( \langle q\rangle \otimes \langle a, a, a \rangle \bigr) \geq q a^{\omega}.
\end{equation}
(And this inequality is in fact tight, although we will not go into that here.)
We are using here that the direct sum of $q$ copies of a tensor $t$ is isomorphic to the tensor product of the diagonal tensor $\langle q\rangle$ and $t$.
Equivalently, in the language of rates, for any~$\alpha, \beta \in \NN$ it holds that
\begin{equation}
\relexp(\langle2\rangle, \langle 2\rangle^\alpha \langle 2,2,2\rangle^\beta) \geq \alpha + \beta \omega
\end{equation}
that is
\begin{equation}
\frac{\relexp(\langle2\rangle, \langle 2\rangle^\alpha \langle 2,2,2\rangle^\beta) - \alpha}{\beta} \geq \omega.
\end{equation}
(Here $\alpha$ corresponds to $\log_2 q$ and $\beta$ corresponds to $\log_2 a$. For simplicity and concreteness we will consider only integer $\alpha$ and $\beta$.)
Thus for any tensor $t$ and for any~$\alpha, \beta \in \NN$ it holds that
\begin{equation}\label{mt}
\frac{\relexp(\langle2\rangle, t) \relexp(t, \langle 2\rangle^\alpha \langle 2,2,2\rangle^\beta) - \alpha}{\beta} \geq \omega.
\end{equation}%
This inequality \eqref{mt} is a method for upper bounding $\omega$ and we say that it is \emph{catalytic} when~$\alpha > 0$, the catalyst being the tensor $\langle 2\rangle^\alpha$.  In fact, upper bounds coming from the approach in the Coppersmith--Winograd paper and its follow-ups take the form of this inequality for specific $t$,~$\alpha$ and $\beta$.
The following barrier in terms of $\alpha, \beta$ and the irreversibility $\irr(t)$ of~$t$ for any method of the form~\eqref{mt} says that catalysis boosts the irreversibility barrier.

\begin{theorem}\label{th2}
For any tensor $t$ and $\alpha, \beta \in \NN$ it holds that %
\begin{equation}
\frac{\relexp(\langle2\rangle, t) \relexp(t, \langle 2\rangle^\alpha \langle 2,2,2\rangle^\beta) - \alpha}{\beta}\geq 2 \irr(t) + \frac{\alpha}{\beta}\bigl( \irr(t) - 1\bigr) \geq 2 \irr(t).
\end{equation}
\end{theorem}
\begin{proof}
By the triangle inequality,
\begin{equation}
\relexp(\langle 2\rangle, t) \relexp(t, \langle2\rangle^\alpha \langle 2,2,2\rangle^\beta) \relexp(\langle2\rangle^\alpha \langle 2,2,2\rangle^\beta, \langle2\rangle) \geq \relexp(\langle 2\rangle, t) \relexp(t, \langle 2\rangle) = \irr(t).
\end{equation}
Therefore, 
\begin{equation}\label{theother}
\relexp(\langle 2\rangle, t) \relexp(t, \langle2\rangle^\alpha \langle 2,2,2\rangle^\beta) \geq \frac{\irr(t)}{ \relexp(\langle2\rangle^\alpha \langle2,2,2\rangle^\beta, \langle2\rangle)}.
\end{equation}
We can bound this by
\begin{equation}\label{knowntobeequality}
\frac{\irr(t)}{ \relexp(\langle2\rangle^\alpha \langle2,2,2\rangle^\beta, \langle2\rangle)} \geq (\alpha + 2\beta) \irr(t)
\end{equation}
since for any tensors $s,t,u$ the inequality $\relexp(s\otimes t, u)^{-1} \geq \relexp(s,u)^{-1} + \relexp(t,u)^{-1}$ holds.
(The inequality in \eqref{knowntobeequality} is actually known to be an equality, since the asymptotic subrank is known to be additive on any collection of tensors of the form $\{\oplus_i t^{\otimes {n_i}} : n_i \in \NN\}$ for any fixed tensor $t$. This for example follows from the asymptotic subrank being characterized by the asymptotic spectrum of tensors that we will discuss in the next section.)
Combining \eqref{theother} and \eqref{knowntobeequality}, subtracting $\alpha$, dividing by $\beta$ and using that $\irr(t) - 1 \geq 0$ (\cref{biggerthanone}) gives the barrier
\begin{equation}
\frac{\relexp(\langle2\rangle, t) \relexp(t, \langle 2\rangle^\alpha \langle 2,2,2\rangle^\beta) - \alpha}{\beta} \geq \frac{(\alpha + 2\beta) \irr(t) - \alpha}{\beta}  = 2 \irr(t) + \frac{\alpha}{\beta}( \irr(t) - 1) \geq 2 \irr(t). %
\end{equation}
This proves the claim.
\end{proof}

As a corollary of the above theorem, we present a barrier on any approach of the following form. The Schönhage $\tau$-theorem implies that for any $a,b,c \in \NN_{\geq 1}$ and any tensor $t$ it holds that
\begin{equation}
\frac{\relexp(\langle2\rangle,t) \relexp(t, \langle2\rangle^\alpha \langle a, b, c \rangle) - \alpha}{\tfrac13\log_2(a b c)} \geq \omega.
\end{equation}
Again, we think of this inequality as a method for upper bounding $\omega$.
To prove a barrier for this method we define a cyclic symmetrization of $t$.
For any tensor $t = (t_{ijk}) \in \FF^{n \times n \times n}$ we define the cyclic permutations $(1,2,3)\cdot t$ and $(1,2,3)^2\cdot t$ by cyclically permuting the indices $i$, $j$ and $k$. We define the \emph{cyclic symmetrization} $\cyc(t)$ by $\cyc(t) \coloneqq t \otimes ((1,2,3)\cdot t) \otimes ((1,2,3)^2\cdot t) \in \FF^{n^3 \times n^3 \times n^3}$.
For example, $\cyc(\langle a,b,c\rangle) = \langle a,b,c\rangle \otimes \langle c,a,b\rangle \otimes \langle b,c,a\rangle = \langle abc,abc,abc\rangle$.
We prove the following barrier in terms of $a, b, c$,~$\alpha$ and the irreversibility of $\cyc(t)$.

\begin{corollary}\label{barrierth}
 For any tensor $t$ and $\alpha \in \NN$ and $a,b,c\in \NN_{\geq1}$ it holds that
\begin{align}
\frac{\relexp(\langle2\rangle, t) \relexp(t, \langle 2\rangle^\alpha \langle a,b,c\rangle) - \alpha}{\tfrac13 \log_2(abc)} &\geq 2 \irr( \cyc(t)) + \frac{\alpha}{\tfrac13\log_2(abc)}(\irr(\cyc(t)) - 1)\\
&\geq 2\irr(\cyc(t)).
\end{align}
\end{corollary}

One verifies that $\irr(t) \geq \irr(\cyc(t))$.
If $t$ is cyclically symmetric, then $\cyc(t) = t^{\otimes 3}$ and we have the equality $\irr(t) = \irr(\cyc(t))$.

\begin{proof}
We first prove that 
\[
\relexp(\langle2\rangle, t) \geq \relexp(\langle2\rangle, \cyc(t)^{\tfrac13}).
\]
Suppose that $\langle 2 \rangle^m \geq t^n$. Then also $\langle2\rangle^m \geq ((1,2,3)\cdot t)^n$ and $\langle2\rangle^m \geq ((1,2,3)^2\cdot t)^n$. Multiplying these inequalities gives $\langle2\rangle^{3m} \geq \cyc(t)^n$. We conclude that $m/n \geq \omega(\langle 2\rangle, \cyc(t)^{1/3})$.
By a similar argument we find that
\begin{equation}
\relexp(t, \langle2\rangle^\alpha \langle a,b,c\rangle) 
\geq %
\relexp(\cyc(t)^{\tfrac13}, \langle2\rangle^\alpha \langle 2,2,2\rangle^{\tfrac13\log_2(abc)}).
\end{equation}
Note that we are using real powers of tensors here inside the relative exponent $\omega(\cdot, \cdot)$. This is justified by taking powers of the relevant tensors and taking a limit.
Using both inequalities and then applying \cref{th2} gives
\begin{align}
\frac{\relexp(\langle 2\rangle, t) \relexp(t, \langle2\rangle^\alpha \langle a,b,c\rangle) -\alpha}{\tfrac13\log_2(abc)} &\geq \frac{\relexp(\langle 2\rangle, \cyc(t)) \relexp(\cyc(t), \langle2\rangle^\alpha \langle 2,2,2\rangle^{\tfrac13\log_2(abc)}) - \alpha}{\tfrac13 \log_2(abc)}\\
&\geq 2 \irr(\cyc(t)).
\end{align}
This proves the statement of the theorem. 
\end{proof}

\subsection{Better barriers through monomial irreversibility}\label{monomialsec}
Finally, we impose as an extra constraint that the transformation from the intermediate tensor $t$ to the matrix multiplication tensor happens via monomial restriction (\cref{monomial}), that is,~we consider the approach
\begin{equation}
\relexp(\langle2\rangle, t) \monrelexp(t, \langle2,2,2\rangle) \geq \omega
\end{equation}
and the more structured approaches
\begin{equation}
\frac{\relexp(\langle2\rangle, t) \monrelexp(t, \langle 2\rangle^\alpha \langle 2,2,2\rangle^\beta) - \alpha}{\beta} \geq \omega
\end{equation}
and
\begin{equation}
\frac{\relexp(\langle2\rangle,t) \monrelexp(t, \langle2\rangle^\alpha \langle a, b, c \rangle) - \alpha}{\tfrac13\log_2(a b c)} \geq \omega.
\end{equation}
The proofs in the previous sections can be directly adapted to prove:
\begin{theorem} For any tensor $t$ it holds that
\begin{equation}
\relexp(\langle2\rangle, t) \monrelexp(t, \langle2,2,2\rangle) \geq 2\monirr(t).
\end{equation}
\end{theorem}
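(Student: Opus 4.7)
The plan is to mirror the proof of \cref{th1} step-for-step, replacing the standard relative exponent with the monomial version wherever the tensor $t$ is transformed to $\langle 2,2,2\rangle$ and then contracted back to $\langle 2\rangle$. The two ingredients I need are both already in place: the monomial triangle inequality (part (ii) of the proposition stated in \cref{monomial}) and the identity $\monrelexp(\langle 2,2,2\rangle, \langle 2\rangle) = \tfrac12$ recorded at the end of \cref{monomial}.

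First I would apply the monomial triangle inequality to the chain $t \to \langle 2,2,2\rangle \to \langle 2\rangle$, obtaining
\begin{equation*}
\monrelexp(t, \langle 2,2,2\rangle)\,\monrelexp(\langle 2,2,2\rangle, \langle 2\rangle) \geq \monrelexp(t, \langle 2\rangle).
\end{equation*}
Multiplying through by the nonnegative scalar $\relexp(\langle 2\rangle, t)$ preserves the inequality, and its right hand side becomes $\relexp(\langle 2\rangle, t)\,\monrelexp(t, \langle 2\rangle) = \monirr(t)$ by definition. Substituting $\monrelexp(\langle 2,2,2\rangle, \langle 2\rangle) = \tfrac12$ and dividing then gives
\begin{equation*}
\relexp(\langle 2\rangle, t)\,\monrelexp(t, \langle 2,2,2\rangle) \geq \frac{\monirr(t)}{\tfrac12} = 2\monirr(t),
\end{equation*}
which is the claim.

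No genuine obstacle is expected: the monomial triangle inequality itself is a direct composition of generalised sub-permutation matrices, whose Kronecker products and compositions are again generalised sub-permutation matrices, so it goes through exactly as in the standard case. The only thing one must be careful about is that the middle leg of the chain, $\monrelexp(\langle 2,2,2\rangle, \langle 2\rangle)$, is indeed a monomial exponent, so the chained inequality stays within the monomial category; this is precisely why the identity $\monrelexp(\langle 2,2,2\rangle, \langle 2\rangle)=\tfrac12$ established in \cref{monomial} is the right input rather than the weaker $\relexp(\langle 2,2,2\rangle, \langle 2\rangle)=\tfrac12$. The same substitution applied to the proofs of \cref{th2} and \cref{barrierth} yields the monomial analogues listed in this section.
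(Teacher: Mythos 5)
Your proof is correct and is exactly the ``direct adaptation'' the paper has in mind (the paper omits details, stating only that the earlier proofs adapt directly). You correctly preserve the asymmetry built into the definition of $\monirr$: the first leg $\relexp(\langle2\rangle, t)$ stays non-monomial while the triangle inequality is applied only within the monomial preorder to the chain $t \to \langle 2,2,2\rangle \to \langle 2\rangle$; and you rightly flag that the needed constant is the monomial exponent $\monrelexp(\langle 2,2,2\rangle, \langle 2\rangle)=\tfrac12$ recorded at the end of \cref{monomial}, not merely $\relexp(\langle 2,2,2\rangle, \langle 2\rangle)=\tfrac12$, since the chained inequality must stay within the monomial category.
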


\begin{theorem}\label{month2}
For any tensor $t$ and $\alpha, \beta \in \NN$ it holds that%
\begin{equation}
\frac{\relexp(\langle2\rangle, t) \monrelexp(t, \langle 2\rangle^\alpha \langle 2,2,2\rangle^\beta) - \alpha}{\beta}\geq 2 \monirr(t) + \frac{\alpha}{\beta}\bigl( \monirr(t) - 1\bigr) \geq 2 \monirr(t).
\end{equation}
\end{theorem}

\begin{corollary}\label{monbarrierth}
 For any tensor $t$ and $\alpha \in \NN$ and $a,b,c\in \NN_{\geq1}$ it holds that
\begin{align}
\frac{\relexp(\langle2\rangle, t) \monrelexp(t, \langle 2\rangle^\alpha \langle a,b,c\rangle) - \alpha}{\tfrac13 \log_2(abc)} &\geq 2 \monirr( \cyc(t)) + \frac{\alpha}{\tfrac13\log_2(abc)}(\monirr(\cyc(t)) - 1)\\
&\geq 2\monirr(\cyc(t)).
\end{align}
\end{corollary}

\section{Methods for lower bounding irreversibility}\label{sec:methods}

We have seen how lower bounds on irreversibility imply barriers.
By definition of irreversibility, such lower bounds come from lower bounds on asymptotic rank and upper bounds on asymptotic subrank.
In this section we discuss lower bounding irreversibility from four points of view: the asymptotic spectrum of tensors, the support functionals, the quantum functionals and the asymptotic slice rank.
We will use the support functionals to compute explicit barriers in \cref{sec:ex}; the rest of this section serves as a survey and to provide comparison.

\subsection{The asymptotic spectrum of tensors}\label{subsec:aspec}
We begin with an abstract and clean approach to characterizing irreversibility, and will move to practical bounds in the following sections. 
Let~$S$ be a family of tensors that is closed under $\otimes$ and $\oplus$ and that contains $\langle 1\rangle$.
Strassen~\cite{strassen1988asymptotic} introduced the \emph{asymptotic spectrum of tensors}  $\Delta(S)$ as the set of all maps $F : S \to \RR_{\geq 0}$ that satisfy for any tensors $s$ and $t$ in $S$ that
\begin{itemize}
\item $s \leq t \Rightarrow F(s) \leq F(t)$
\item $F(s \otimes t) = F(s) F(t)$
\item $F(s \oplus t) = F(s) + F(t)$
\item $F(\langle 1\rangle) = 1$.
\end{itemize} 
Strassen proved in \cite{strassen1988asymptotic} that for any tensor $t \in S$ it holds that
$\asympsubrank(t) = \min_{F \in \aspec(S)} F(t)$
and 
$\asymprank(t) = \max_{F \in \aspec(S)} F(t)$.
From this we directly obtain the following concise characterization of irreversibility in terms of the asymptotic spectrum of tensors $\Delta(S)$.

\begin{proposition} Let $t$ be a tensor in $S$. Then
\begin{equation}
\irr(t) = \frac{\max_{F \in \aspec(S)} \log F(t)}{\min_{G \in \aspec(S)} \log G(t)}.
\end{equation}
\end{proposition}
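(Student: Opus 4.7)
The plan is to unfold the definition of irreversibility and invoke Strassen's duality theorem for the asymptotic spectrum, which has already been stated in this subsection. Recall that by the equivalent formulation of irreversibility given in \cref{sec:irr},
\[
\irr(t) = \frac{\log_2 \asymprank(t)}{\log_2 \asympsubrank(t)}.
\]
Strassen's results from \cite{strassen1988asymptotic}, quoted just above the proposition, supply the spectral representations
\[
\asymprank(t) = \max_{F \in \aspec} F(t), \qquad \asympsubrank(t) = \min_{F \in \aspec} F(t),
\]
so the entire task is to substitute these two identities into the numerator and denominator.

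The one small thing to verify is that the $\log$ may be commuted through the $\max$ and $\min$. Since every $F \in \aspec$ is a semiring homomorphism to $\RR_{\geq 0}$, one has $F(\langle 1\rangle) = 1$, and by $\leq$-monotonicity $F(t) \geq 1$ for any nonzero tensor $t$; together with the standing assumption that $t$ is not of the form $u \otimes v \otimes w$ (which gives $\asympsubrank(t) > 1$ and hence some $F(t) > 1$), this ensures that all logarithms in sight are finite and that the denominator is strictly positive. Because $\log$ is strictly increasing on $\RR_{>0}$, we can move it inside the extrema to obtain
\[
\log \asymprank(t) = \max_{F \in \aspec} \log F(t), \qquad \log \asympsubrank(t) = \min_{F \in \aspec} \log F(t).
\]

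Dividing and observing that the base of the logarithm cancels in the ratio then yields the identity claimed in the proposition. I do not anticipate any substantive obstacle: the statement is a direct repackaging of the previously recalled definition of $\irr(t)$ in terms of Strassen's asymptotic spectrum, and the proof should fit comfortably in three or four lines once these observations are made explicit.
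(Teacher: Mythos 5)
Your proposal is correct and is essentially the paper's own (unwritten) proof: the paper presents the proposition as a direct consequence of the preceding sentence, which recalls Strassen's duality $\asympsubrank(t) = \min_{F \in \aspec} F(t)$ and $\asymprank(t) = \max_{F \in \aspec} F(t)$, together with the earlier reformulation $\irr(t) = \log_2 \asymprank(t)/\log_2 \asympsubrank(t)$. Your extra care about pushing $\log$ through $\max$/$\min$ and about positivity (using the standing assumption that $t$ is not a simple tensor, so $\asympsubrank(t) \geq 2$) is exactly the right sanity check and does not change the route.
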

This characterization is clean but does not lead to a practical way of computing, or even lower bounding, the irreversibility $\irr(t)$ for a general tensor $t$.
This is because our knowledge of $\aspec(S)$ is very limited for any general family of tensors~$S$. Namely, a priori we only know that for every $i \in [3]$ the function $t \mapsto \rank(t_i)$ is in~$\Delta(S)$, where $t_i$ is the flattening as described in \cref{balanced}. Thus we have the inequalities $\asympsubrank(t) \leq \min_i \rank(t_i)$ and $\asymprank(t) \geq \max_i \rank(t_i)$. In fact, $\max_i \rank(t_i)$ is the best lower bound on $\asymprank(t)$ that we know of. We will keep using this lower bound on the asymptotic rank in the coming, more practical, sections. There are, however, more powerful tools than the flattening ranks $\rank(t_i)$ to upper bound~$\asympsubrank(t)$, which we will discuss now.

\subsection{The support functionals}\label{subsec:support}
The support functionals of Strassen provide an extremely powerful method for upper bounding the asymptotic subrank of tensors (and hence lower bounding irreversibility). The definition of the support functionals is technical at first sight, but as a reward these compact objects provide us with the best upper bounds on the asymptotic subrank.
Before we define the support functionals, we need to define some notions for tensors and for probability distributions.
For tensors $s, t \in \FF^{n_1 \times n_2 \times n_3}$ we write $s \cong t$ and say that \emph{$s$ and~$t$ are isomorphic} if there are invertible linear maps $A_i$ such that $(A_1, A_2, A_3) \cdot s = t$. 
Recall that we denote by 
\[
\supp(s) \coloneqq \{(i_1, i_2, i_3) \in [n_1] \times [n_2] \times [n_3] : s_{i_1, i_2, i_3} \neq 0\}
\]
the \emph{support of $s$}.
For any probability distribution $Q$ on $[n]$ let $H(Q) = \sum_{i \in [n]} Q(i) \log_2 (1/Q(i))$ denote the \emph{Shannon entropy of $Q$}. For any probability distribution $P$ on $[n_1] \times [n_2] \times [n_3]$ let~$P_i$ be the \emph{marginal distribution on~$[n_i]$}. That is, the marginal distribution $P_1$ is defined by $P_1(i_1) \coloneqq \sum_{i_2 \in [n_2]} \sum_{i_3 \in [n_3]} P(i_1, i_2, i_3)$ and the other marginal distributions are defined similarly.
Let $\prob(\supp(s))$ denote the set of all probability distributions on $\supp(s)$. (Thus for any element $P \in \prob(\supp(s))$ we can talk about its marginal distributions $P_i$ and their entropies $H(P_i)$.)
For any tensor $t$ and any probability vector~$\theta \in \RR^3$ Strassen defined the \emph{support functional} $\zeta^\theta$ by setting
\[
\zeta^\theta(t) \coloneqq 2^{\rho^{\theta}(t)}
\]
where
\[
\rho^\theta(t) \coloneqq \min_{s \cong t} \max_{P \in \prob(\supp(s))} \sum_{i=1}^3 \theta_i H(P_i).
\]
(We note that we believe that the name \emph{support functional} derives from the general concept of \emph{support functions} in convex analysis, rather than the support of tensors.)
The minimization over all $s$ isomorphic to $t$ appearing in the definition of $\rho^\theta(t)$ is generally not well understood, but, fortunately, for the sake of upper bounding $\zeta^\theta(t)$ it suffices to find one good $s$ isomorphic to $t$. Note that the Shannon entropy $H$ is a concave function and thus the weighted marginal entropy $\sum_{i=1}^3 \theta_i H(P_i)$ is a concave function of $P$. Therefore, the analysis of the maximization over $P$, being a convex program over an explicitly given domain, is usually straightforward.

It is not hard to see that the three flattening ranks $\rank(t_i)$ are among the support functionals, namely when we set $\theta$ to $(1,0,0)$, $(0,1,0)$ or $(0,0,1)$. Thus the support functionals can be thought of as interpolations between the three flattening ranks.
Strassen proves in \cite{strassen1991degeneration} the fundamental property that 
\[
\asympsubrank(t) \leq \zeta^\theta(t).
\]
There are numerous examples where $\min_\theta \zeta^\theta(t) < \min_i \rank(t_i)$, of which we will see some later.  It is not known whether the equality $\asympsubrank(t) = \min_\theta \zeta^\theta(t)$ holds in general. Strassen proved that equality holds for the family of \emph{tight} tensors~\cite{strassen1991degeneration}.\footnote{A tensor $t$  is called \emph{tight} if for some choice of basis there are injective maps $\alpha_1, \alpha_2, \alpha_3$ such that for every $a\in \supp(t)$ it holds that $\alpha_1(a_1) + \alpha_2(a_2) + \alpha_3(a_3) = 0$. Tight tensors play an important role in the laser method for constructing matrix multiplication algorithms, as the laser method requires the outer structure of the intermediate tensor to be tight.}
We conclude that we have the following lower bound on the irreducibility in terms of the flattening ranks and the support functionals:

\begin{proposition}
Let $t$ be a tensor. Then
\begin{equation}
\irr(t) \geq \frac{\max_{i} \log_2 \rank(t_i)}{\min_{\theta} \rho^\theta(t)}. %
\end{equation}
\end{proposition}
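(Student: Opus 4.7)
The plan is to unpack the definition
\[
\irr(t) = \frac{\log_2 \asymprank(t)}{\log_2 \asympsubrank(t)}
\]
and prove the two separate inequalities
\[
\log_2 \asymprank(t) \geq \max_i \log_2 \rank(t_i)
\qquad\text{and}\qquad
\log_2 \asympsubrank(t) \leq \min_\theta \log_2 \zeta^\theta(t).
\]
Because both sides of the target inequality are positive (for tensors that are not $u\otimes v\otimes w$, by our standing assumption), combining these two inequalities by the elementary fact that $A/B \geq A'/B'$ whenever $A \geq A' > 0$ and $0 < B \leq B'$ yields exactly the claim.

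For the first inequality (the lower bound on $\asymprank$), the key point is that the matrix rank of any flattening is multiplicative under the tensor Kronecker product and is dominated by the tensor rank. Concretely, for each $i \in [3]$, the flattening of $t^{\otimes n}$ along the $i$th factor is the Kronecker product of $n$ copies of $t_i$, so $\rank((t^{\otimes n})_i) = \rank(t_i)^n$. Since the tensor rank of any tensor dominates the matrix rank of any of its flattenings, $\rank(t_i)^n \leq \rank(t^{\otimes n})$, and taking $n$th roots and passing to the limit gives $\rank(t_i) \leq \asymprank(t)$. Taking the maximum over $i$ and then logarithms produces the desired lower bound on $\log_2 \asymprank(t)$.

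For the second inequality (the upper bound on $\asympsubrank$), there is essentially nothing to do: this is precisely the chain \eqref{supportknow} recalled just before the proposition, namely $\asympsubrank(t) \leq \limsup_n \slicerank(t^{\otimes n})^{1/n} \leq \min_\theta \zeta^\theta(t)$, which in turn is the main result of \cite{christandl2017universal} together with Strassen's upper support functional bound from \cite{strassen1991degeneration}. Taking logarithms and then the minimum over $\theta$ gives the required upper bound on $\log_2 \asympsubrank(t)$, and assembling the two pieces as above completes the proof.

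There is no real obstacle here: the proposition is essentially a packaging statement that combines two standard (and already cited) bounds against the two factors in the definition of $\irr(t)$. The only thing worth being careful about is that neither denominator vanishes, which is ensured by restricting attention to tensors that are not of the form $u\otimes v\otimes w$ (so both $\asymprank(t)$ and $\asympsubrank(t)$ are strictly greater than $1$).
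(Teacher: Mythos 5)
Your proof is correct and takes essentially the same route as the paper: the paper states this proposition as a direct consequence of the two bounds it has just recalled, namely that each flattening rank lower-bounds the asymptotic rank (via multiplicativity of flattening rank under Kronecker product and $\rank(t_i)\leq\rank(t)$) and that the support functionals upper-bound the asymptotic subrank as in~\eqref{supportknow}. Your added remark that the standing assumption $t\neq u\otimes v\otimes w$ keeps the denominators positive is a correct and sensible sanity check.
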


We will use the method of the support functionals to lower bound the irreversibility of some explicit tensors in \cref{sec:ex}.

Since we are primarily interested in using the support functionals to upper bound the asymptotic subrank, it is worth to observe that by the von Neumann minimax theorem we may for any fixed tensor $t$ express $\min_\theta \rho^\theta(t)$ in the concise form
\[
\min_\theta \rho^\theta(t) = \min_{s \cong t} \max_{P \in \prob(\supp(s))} \min_{i \in [3]} H(P_i).
\]
Indeed, the function $(\theta, P) \mapsto \sum_i \theta_i H(P_i)$ is convex in $\theta$ and concave in $P$ so the minimax theorem allows us to swap the maximization over $P$ and the minimization over $\theta$. Moreover, $\min_\theta \sum_i \theta_i H(P_i)$ is clearly attained when $\theta$ is one of the vertices $(1,0,0)$, $(0,1,0)$, $(0,0,1)$.

To further familiarize ourselves with the definition of $\zeta^\theta$, we discuss a simple example. For this example let $t$ be the so-called \emph{W-tensor} $t = e_{1,2,2} + e_{2,1,2} + e_{2,2,1}$. We will upper bound $\zeta^\theta(t)$ for $\theta = (1/3, 1/3, 1/3)$. In the evaluation of $\rho^\theta(t)$ we take $s$ to be equal to $t$. (This turns out to be optimal in this case, since $t$ is tight~\cite{strassen1991degeneration}.) Then the support of $s$ is the set
\[
\supp(s) = \{ (1,2,2), (2,1,2), (2,2,1)\}.
\]
Let $P \in \prob(\supp(s))$ assign probability $a$ to $(1,2,2)$, probability $b$ to $(2,1,2)$ and probability~$c$ to~$(2,2,1)$. 
The function $P \mapsto \sum_i \tfrac13H(P_i)$ is concave and invariant under permuting the values of~$a$,~$b$ and $c$. Thus, the maximum $\max_{P\in \prob(\supp(s))} \sum_i \tfrac13H(P_i)$ is attained by the symmetric probability distribution $P$ that assigns $a = b = c = 1/3$ to each element in $\supp(s)$. Then each marginal distribution $P_i$ assigns probability $a$ to $1$ and probability $2a$ to $2$. Thus $\max_{P\in \prob(\supp(s))} \sum_i \tfrac13H(P_i)$ equals the Shannon entropy of $P_i$, which is the \emph{binary entropy} $h(1/3) = -\tfrac13\log_2\tfrac13 -\tfrac23\log_2\tfrac23= 0.918...$. We conclude that for the W-tensor $t$ it holds that the support functional $\zeta^{(1/3,1/3,1/3)}(t)$ is upper bounded by $\zeta^{(1/3,1/3,1/3)}(t) \leq 2^{h(1/3)} = 1.88...$. Note that $1.88...$ is strictly smaller than the flattening rank $\rank(t_i) = 2$. We remark that in fact in this case we know the asymptotic monomial subrank to be $\monasympsubrank(t) = \zeta^{(1/3,1/3,1/3)}(t) = 2^{h(1/3)} = 1.88...$ since this $t$ is tight. Since a discussion of tightness is not in the scope of this paper we refer to~\cite{strassen1991degeneration} for that.

\section{Explicit barriers}\label{sec:ex}
We exhibit explicit barriers by computing lower bounds on the irreversibility of well-known intermediate tensors that play a crucial role in the best upper bounds on the matrix multiplication exponent $\omega$. These tensors are the small and big Coppersmith--Winograd tensors. Then we discuss the reduced polynomial multiplication tensors. Finally we discuss monomial irreversibility of structure tensors of finite group algebras and relations to the group-theoretic approach.

\subsection{Irreversibility of Coppersmith--Winograd tensors}\label{cwirr}
We now compute lower bounds for the irreversibility of the Coppersmith--Winograd tensors.
As mentioned, we will use the support functionals of Strassen \cite{strassen1991degeneration} in our computation to upper bound the asymptotic subrank. 

(Upper bounds on the asymptotic subrank of complex tensors may be obtained, not only from the Strassen support functionals, but also from the quantum functionals. For the tensors in \cref{cweasy} and \cref{cwhard}, however, it is known that the quantum functionals will give the same bound as the support functionals, since these tensors are \emph{free tensors} \cite[Section~4.3]{christandl2017universal}.)

\begin{theorem}[{Small Coppersmith--Winograd tensors \cite[Section 6]{MR1056627}}]\label{cweasy}
For any integer $q\geq 2$, the irreversibility of the small Copper\-smith--Winograd tensor
\begin{equation}
\cw_q \coloneqq \sum_{i=1}^q e_{0,i,i} + e_{i,0,i} + e_{i,i,0} 
\end{equation}
is lower bounded by
\begin{equation}\label{th9b}
2\irr(\cw_q) \geq 2\cdot \frac{\log_2(q+1)}{\log_2 3 - \tfrac23 + \tfrac 23 \log_2 q}.
\end{equation}
\end{theorem}

\begin{proof}
The rank of each flattening of $\cw_q$ equals $q+1$. Therefore, $\asymprank(\cw_q) \geq q+1$.
To upper bound the asymptotic subrank $\asympsubrank(\cw_q)$ one can upper bound the Strassen support functional with~$\theta = (1/3, 1/3, 1/3)$ as in \cite[Example 4.22]{christandl2017universal} by
\begin{equation}
\rho^\theta(\cw_q) \leq %
\log_2 3 - \frac{2}{3} + \frac{2}{3}\log_2 q.
\end{equation}
We find that 
\begin{equation}\label{smallcwb}
\irr(\cw_q) \geq \frac{\log_2(q+1)}{\log_2 3 - \tfrac23 + \tfrac 23 \log_2 q}.
\end{equation}
This proves the theorem.
\end{proof}

\begin{remark}
If $q > 2$, then the right-hand side of \eqref{th9b} is at least~$2.02..$ See the table in \cref{intro} for more values.
If $q=2$, however, then the right-hand side of \eqref{th9b} equals 2. 
\cref{cweasy} thus does not rule out using $\cw_2$ to prove that $\omega = 2$. Indeed, as observed in \cite[Section 11]{MR1056627}), if~$\omega(\langle2\rangle, \cw_2) = \log_2 3$, then $\omega = 2$. 

Currently, the best upper bound we have on $\relexp(\langle2\rangle, \cw_q)$ is $\log_2(q+2)$. If $\omega(\langle 2\rangle, \cw_q) = \log_2(q+2)$, then instead of \eqref{th9b} we get the better barrier
\begin{equation}\label{better}
2 \irr(\cw_q) \geq \frac{2\log_2(q+2)}{\log_2 3 - \tfrac23 + \tfrac 23 \log_2 q}.
\end{equation}
The right-hand side of \eqref{better} has a minimum value of
\begin{equation}
\frac{18}{5 \log_2 3} = 2.27..
\end{equation}
attained at $q = 6$.
\end{remark}

\begin{remark}
The following computation serves as a sanity check for our barrier given by \cref{th2}. Namely we see in an example how by putting some extra assumption the barrier given by \cref{th2} becomes tight.
Coppersmith and Winograd in \cite{MR1056627} used $\cw_q$ as an intermediate tensor in combination with the laser method and a certain \emph{outer structure}. Here outer structure refers to the idea of viewing the intermediate tensor as a block tensor. The outer structure is the block-support of the intermediate tensor, see also \cite[Section 9]{blaser2013fast}. For $\cw_q$ the outer structure is the W-tensor that we discussed in \cref{subsec:support} which has asymptotic subrank~$2^{h(1/3)}$.
When we impose that we apply the laser method on $\cw_q$ with this outer structure to get an upper bound~$\hat{\omega} \geq \omega$ we get the following better barrier via \cref{th2} or \cref{barrierth} and $\alpha = h(1/3)$ and $\beta = \tfrac13 \log_2(q)$:
\begin{equation}\label{cwlaser}
\hat{\omega} \geq 2 \irr(\cw_q) + \frac{h(1/3)}{\tfrac13\log_2(q)} (\irr(\cw_q) - 1).
\end{equation}
Some values of \eqref{cwlaser} are as follows, using \eqref{smallcwb} to get a bound on $\irr(\cw_q)$:
\begin{equation*}
\begin{minipage}{0.3\textwidth}
%
%
\begin{tabular}{ll}
\toprule
$q$ &  \\
\midrule
2 & 2\\
3 & 2.04744\\
4 & 2.10545\\
5 & 2.15338\\
6 & 2.19236\\
7 & 2.22455\\
\bottomrule
\end{tabular}
\end{minipage}
\end{equation*}
If in addition we assume that $\omega(\langle2\rangle, \cw_q) = \log_2(q+2)$, then we obtain the barrier
\begin{equation}\label{cwbetter}
\hat{\omega} \geq 2 \irr(\cw_q) + \frac{h(1/3)}{\tfrac13\log_2(q)} (\irr(\cw_q) - 1).
\end{equation}
Some values of \eqref{cwbetter} are as follows, using \eqref{better} to get a bound on $\irr(\cw_q)$:
\begin{equation*}
\begin{minipage}{0.3\textwidth}
%
%
\begin{tabular}{ll}
\toprule
$q$ &  \\
\midrule
2 & 3.24511\\
3 & 2.65678\\
4 & 2.50000\\
5 & 2.44072\\
6 & 2.41594\\
7 & 2.40614\\
8 & 2.40363\\
9 & 2.40492\\
10& 2.40824\\
11& 2.41266\\
\bottomrule
\end{tabular}
\end{minipage}
\end{equation*}
with minimum value of 2.40... These barriers in fact match the upper bound
\begin{equation}
\omega \leq \log_q \frac{4 (q+2)^3}{27}
\end{equation}
that was obtained by Coppersmith and Winograd by applying the laser method in the way described above. Thus our sanity check succeeds.
Other intermediate tensors with a given outer structure may be analyzed similarly.
\end{remark}

\begin{theorem}[{Big Coppersmith--Winograd tensors \cite[Section 7]{MR1056627}}]\label{cwhard}
For any integer $q \geq 1$ the irreversibility of the big Coppersmith--Winograd tensor
\begin{equation}
\CW_q \coloneqq e_{0,0,q+1} + e_{0,q+1,0} + e_{q+1,0,0} + \sum_{i=1}^q e_{0,i,i} + e_{i,0,i} + e_{i,i,0}
\end{equation}
is lower bounded by
\begin{equation}\label{CWb}
2\irr(\CW_q) \geq 
\begin{dcases}
\frac{2\log_2(3)}{f_1(\tfrac{1}{18}(\sqrt{33} - 3))} = 2.16.. & \textnormal{if } q = 1\\[0.5em]
\frac{2\log_2(4)}{f_2(\tfrac19)} = 2.17.. & \textnormal{if }q=2\\[0.5em]
\frac{2\log_2(q+2)}{f_q\bigl( \frac{3q - \sqrt{32+q^2}}{6(q^2-4)}\bigr)} & \textnormal{if }q \geq 3\\
\end{dcases}
\end{equation}
where 
\begin{equation}
f_q(x) \coloneqq - \Bigl(\frac23 - qx\Bigr)\log_2\Bigl(\frac23 - qx\Bigr) - q\cdot 2x \log_2(2x) - \Bigl(\frac13 - qx\Bigr)\log_2\Bigl(\frac13 - qx\Bigr).
\end{equation}
\end{theorem}

\begin{proof}
First, we compute the asymptotic rank of $\CW_q$.
One verifies directly that the matrix rank of each flattening of~$\CW_q$ equals $q+2$, so $q+2 \leq \asymprank(\CW_q)$. On the other hand, the well-known border rank upper bound $\borderrank(\CW_q) \leq q+2$ implies that $\asymprank(\CW_q) \leq q+2$. We conclude that $\asymprank(\CW_q) = q+2$.

Second, we will upper bound the asymptotic subrank $\asympsubrank(\CW_q)$ via the Strassen support functional~$\zeta^\theta(\CW_q)$ with $\theta = (1/3, 1/3, 1/3)$.
Recall that $\zeta^\theta(\CW_q) = 2^{\rho^{\theta}(\CW_q)}$ where
\begin{gather*}
\rho^\theta(\CW_q) = \min_{s \cong \CW_q} \max_{P \in \prob(\supp(s))} \sum_{i=1}^3 \theta_i H(P_i).
\end{gather*}
We let $s = \CW_q$. Then the support of $\CW_q$ is given by
\begin{equation}
\supp(\CW_q) = \{(0, i, i), (i, 0, i), (i, i, 0) : i \in [q]\} \cup \{(0,0,q+1), (0, q+1, 0), (q+1, 0, 0)\}.
\end{equation}
We will now evaluate $\max_{P \in \prob(\supp(\CW_q))} \sum_{i=1}^3 \theta_i H(P_i)$. We first observe that $\supp(\CW_q)$ is symmetric under permutation of the three coordinates and under permutation of the elements of~$[q]$. The function $P \mapsto \sum_{i=1}^3 \theta_i H(P_i)$ is concave. Thus by a simple convexity argument we may in our maximization, without loss of generality, consider only distributions $P$ that assign some probability $x$ to each of $(0, i, i)$, $(i,0,i)$ and $(0, i, i)$ for all $i \in [q]$, and probability~$\tfrac13 - qx$ to each of $(0, 0, q+1)$, $(0, q+1, 0)$ and $(q+1, 0, 0)$. Under this simplification, we have that
the average marginal entropy $\sum_{i=1}^3 \theta_i H(P_i)$ equals~$f_q(x)$ as defined in the theorem statement. It remains to compute the maximum of~$f_q(x)$ over all $0\leq x$ for which $3qx \leq 1$. From a calculus computation (easy to verify using computer algebra software) it follows that the maximum of $f_q(x)$ over all $0\leq x$ for which $3qx \leq 1$ is attained at 
\begin{equation}
x_0 = \begin{dcases}
\tfrac{1}{18}(\sqrt{33} - 3) & \textnormal{if } q = 1\\
\tfrac19 & \textnormal{if }q = 2\\
\frac{3q - \sqrt{32+q^2}}{6(q^2-4)} & \textnormal{if }q \geq 3.
\end{dcases}
\end{equation}
Then $\asympsubrank(\CW_q) \leq 2^{f_q(x_0)}$. Thus we have 
\[
2 \irr(\CW_q) \geq 2 \log_2 \asymprank(\CW_q) / \log_2 \asympsubrank(\CW_q) \geq 2 \log_2(q+2) / f_q(x_0),
\]
which evaluates to the bound in the claim.
\end{proof}

\begin{remark}
The lowest value of the right-hand side of \eqref{CWb} is $2.16..$ attained at $q = 1$. See the table in \cref{intro} for more values and see \cref{code} for code to compute any values.
\end{remark}

\subsection{Irreversibility of reduced polynomial multiplication tensors}\label{rpm}
The reduced polynomial multiplication tensors $t_n$ are an example of a natural family of tensors in which each tensor is irreversible, but where the irreversibility converges to 1 when $n$ goes to infinity.
The irreversibility of the tensors $t_n$ can be computed directly from the computation of the asymptotic rank and asymptotic subrank of $t_n$ in \cite[Theorem~6.7]{strassen1991degeneration}, which uses the support functionals. Namely, Strassen shows that
\[
\asymprank(t_n) = n
\]
and
\[
\asympsubrank(t_n) = z(n)
\]
where
\[
z(n) = \frac{g^n - 1}{g-1} g^{-2(n-1)/3}
\]
and $g > 1$ is the unique positive real solution to the equation
\[
\frac{1}{g-1} - \frac{n}{g^n-1} = \frac{n-1}{3}.
\]
Thus 
\[
\irr(t_n) = \frac{\log_2(n)}{\log_2 z(n)}.
\]
The limit $\lim_{n \to \infty} z(n) / n$ equals a constant, namely $0.84143...$ (see, e.g., \cite[Equation~4.11]{MR3631613}). It follows that 
\[
\lim_{n\to\infty} \irr(t_n) = \lim_{n\to\infty} \log_2(n) / \log_2 z(n) = 1.
\]
Thus $t_n$ is ``reversible in the limit''. In \cref{code} we provide code to compute the values of $\irr(t_n)$ for any $n$.

\subsection{Monomial irreversibility of structure tensors of finite group algebras}
We now discuss irreversibility and monomial irreversibility in the context of the group-theoretic approach developed in \cite{cohn2003group}. This approach produces upper bounds on~$\omega$ via intermediate tensors that are structure tensors of complex group algebras of finite groups. 
As defined earlier, let $\langle G\rangle_\FF$ denote the structure tensor of the group algebra $\FF[G]$ of the finite group $G$, in the standard basis, that is,
\begin{equation}
\langle G \rangle_\FF \coloneqq \sum_{g,h \in G} e_{g,h,gh} \in \FF^{G \times G \times G}.
\end{equation}
The group-theoretic approach~(in particular \cite[Theorem 4.1]{cohn2003group}) produces an inequality of the form
\begin{equation}
\langle G\rangle_\CC \mongeq \langle a,b,c\rangle
\end{equation}
(the base field that is used in \cite{cohn2003group} is $\CC$, in order to make the representation theory work)
which ultimately (see \cite[Eq.~(1)]{cohn2003group}) leads to the bound
\begin{equation}\label{gta}
\frac{\relexp(\langle2\rangle, \langle G\rangle_\CC)\monrelexp(\langle G\rangle_\CC, \langle a,b,c\rangle)}{\tfrac13\log_2 (abc)}
 \geq \omega
\end{equation}
where~$\mongeq$ and~$\monrelexp$ are the monomial restriction and monomial relative exponent defined in \cref{monomial}. 

Now the monomial irreversibility barrier from \cref{monomialsec} comes into play.
Upper bounds on the monomial asymptotic subrank of $\langle G\rangle_\FF$ have (using different terminology) been obtained in~\cite{MR3631613, blasiak2017groups, sawin2017bounds}. Those upper bounds imply that $\langle G\rangle_\FF$ is monomially irreversible for every nontrivial finite group $G$.
Together with our results in \cref{monomialsec} and the fact that the tensor~$\langle G\rangle_\FF$ is cyclically symmetric up to a permutation of the basis of one of the tensor legs, this directly leads to nontrivial barriers for the left-hand side of~\eqref{gta} for any fixed nontrivial group~$G$,
thus putting the work of \cite{MR3631613, blasiak2017groups,sawin2017bounds} in a broader context.
We have not tried to numerically optimize the monomial irreversibility barriers for group algebras. We gave an example of a monomial barrier for $G = \ZZ/3\ZZ$ in \cref{ste}.

It should be stressed again that although these results rule out using monomial restrictions from powers of any one single group, part of the hope of the group-theoretic approach is to use a family of groups (such as the symmetric group $S_n$ for increasing $n$) which is \emph{not} just powers of a single starting group. For families of abelian groups of bounded exponent (even if they are not powers of a single group), \cite{MR3631613} rules out STPP constructions reaching $\omega=2$, and for certain families of nilpotent groups (again, even if they are not just powers of a single group) \cite{blasiak2017groups} does similarly.

The results of \cite{MR3631613}, \cite{blasiak2017groups} and \cite{sawin2017bounds}, in fact, show slightly more than the aforementioned barriers for monomial restriction. While, over arbitrary fields, they only rule out monomial restrictions, over fields of bad characteristic (e.g.,~characteristic $p$ when $G$ is a $p$-group satisfying the relevant conditions of their theorems) they also rule out \emph{arbitrary degenerations}. This is because they show slice rank upper bounds, the slice rank of a diagonal tensor equals its rank, and having slice rank at most $r$ is a Zariski-closed condition (proved in \cite{sawin}).

Finally, we mention that the irreversibility barrier (rather than the monomial irreversibility barrier) does not rule out obtaining~$\omega = 2$ via~$\langle G\rangle_\CC$.
Namely, $\langle G\rangle_\CC$ is isomorphic to a direct sum of matrix multiplication tensors, $\langle G\rangle_\CC \cong \bigoplus_i \langle d_i, d_i, d_i\rangle$ and, therefore, the irreversibility satisfies $\irr(\langle G\rangle_\CC) = (\log_2 \sum_i d_i^\omega)/(\log_2 \sum_i d_i^2)$. %
Thus, if~$\omega = 2$, then~$\langle G\rangle_\CC$ is reversible.

\section{Outlook for further barriers}
In \cref{sec:ex} we used the support functionals of \cref{sec:methods} to upper bound the asymptotic subrank and thus lower bound the irreversibility of intermediate tensors. There are two other approaches to do this that we will discuss now. These approaches are at least as powerful as the support functionals for upper bounding asymptotic subrank, and we expect that there are examples where they perform better.
\subsection{The quantum functionals}\label{subsec:quantum}
For tensors over the complex numbers (i.e.,~when~$\FF = \CC$) we have a much deeper understanding of the theory of upper bounds on the asymptotic subrank. For any probability vector $\theta \in \RR^3$ the \emph{quantum functional} $F^\theta$ introduced in \cite{christandl2017universal}, or rather its logarithm, is defined as
\[
\log_2 F^\theta(t) = \max_{P \in \Pi(t)} \sum_{i=1}^r \theta_i H(P_i)
\]
where $\Pi(t)$ is the \emph{moment polytope of $t$}. It is outside the scope of this paper to go into the definition of the moment polytope. For that and a thorough discussion of all the connections to representation theory, invariant theory and quantum information theory we refer to \cite{christandl2017universal}. Here we will discuss the properties of the quantum functionals that are important for upper bounding the asymptotic subrank. The first crucial fact proven in \cite{christandl2017universal} is that the quantum functionals are in the asymptotic spectrum of all complex tensors $\Delta(\{\textnormal{tensors over $\CC$}\})$, which we discussed in \cref{subsec:aspec}.
We conclude that we can lower bound irreversibility in terms of the flattening ranks and the quantum functionals as follows:

\begin{proposition}
Let $t$ be a tensor over the complex numbers. Then
\begin{equation}
\irr(t) \geq \frac{\max_{i} \log_2 \rank(t_i)}{\min_\theta \log_2 F^\theta(t)}.
\end{equation}
\end{proposition}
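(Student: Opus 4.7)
The plan is to unpack the definition $\irr(t) = \log_2 \asymprank(t) / \log_2 \asympsubrank(t)$ and bound the numerator from below and the denominator from above by the quantities appearing on the right-hand side.

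For the numerator, I would first show that for each flattening $t_i$ one has $\rank(t_i) \leq \asymprank(t)$. The point is that matrix rank of flattenings is multiplicative under tensor product, $\rank((t^{\otimes n})_i) = \rank(t_i)^n$, while matrix rank of a flattening is bounded above by tensor rank since any rank-$r$ tensor decomposition yields a rank-$r$ expression of each flattening. Hence $\rank(t_i)^n = \rank((t^{\otimes n})_i) \leq \rank(t^{\otimes n})$, and taking $n$-th roots and the infimum gives $\rank(t_i) \leq \asymprank(t)$. Taking the maximum over $i$ and then $\log_2$ yields $\max_i \log_2 \rank(t_i) \leq \log_2 \asymprank(t)$.

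For the denominator, I would directly invoke the chain of inequalities \eqref{quantumknow} from \cite{christandl2017universal}, which over $\CC$ gives $\asympsubrank(t) \leq \min_\theta F^\theta(t)$, so that $\log_2 \asympsubrank(t) \leq \min_\theta \log_2 F^\theta(t)$. Combined with the standing assumption that $t$ is not a simple tensor (so $\asympsubrank(t) > 1$ and the denominator is strictly positive), dividing the two bounds in the correct direction yields the claimed inequality
\begin{equation}
\irr(t) \;=\; \frac{\log_2 \asymprank(t)}{\log_2 \asympsubrank(t)} \;\geq\; \frac{\max_i \log_2 \rank(t_i)}{\min_\theta \log_2 F^\theta(t)}.
\end{equation}

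Since both underlying estimates are already stated or cited in the excerpt, there is no real obstacle; the only subtlety to double-check is the sign/positivity of the denominator, which is handled by the standing assumption on $t$, and the direction of the inequality when dividing, which is fine because both sides are positive. This mirrors the proof of the previous proposition, the only substantive change being the replacement of the Strassen upper support functional bound by the (tighter) quantum functional bound available over $\CC$.
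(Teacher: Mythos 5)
Your proof is correct and takes essentially the same route the paper intends: it combines the standard flattening-rank lower bound on $\asymprank$ with the quantum-functional upper bound on $\asympsubrank$ from \eqref{quantumknow}, which is exactly what the paper asserts (without spelling out the flattening-rank step, which you fill in correctly via multiplicativity of flattening rank under Kronecker powers). The only small inaccuracy is your parenthetical claim that the standing assumption ``$t$ not simple'' by itself forces $\asympsubrank(t) > 1$ — it does not (e.g.\ $t = e_1 \otimes (e_1\otimes e_1 + e_2\otimes e_2)$) — but the paper glosses over this same degenerate case, so it is not a substantive gap.
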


Comparing to \cref{subsec:support}, it is proved in \cite{christandl2017universal} that the quantum functionals are at least as powerful as the support functionals when it comes to upper bounding the asymptotic subrank. Namely, for any tensor $t$ over the complex numbers holds that $F^\theta(t) \leq \zeta^\theta(t)$. It is an open problem whether this inequality can be strict. We note that, as opposed to the support functionals, the quantum functionals are defined as convex programs. It is an open problem whether the quantum functionals are efficiently computable.

\subsection{Asymptotic slice rank}
We finish \cref{sec:ex} by discussing the asymptotic slice rank as a method to upper bound asymptotic subrank, and the relations to the support functionals and the quantum functionals.
The \emph{slice rank} (introduced by Tao~\cite{tao} in the context of the cap set problem) of a tensor $t$ is the smallest number $r$ such that $t$ can be written as a sum of $r$ slice rank one tensors. A \emph{slice rank one tensor} is a tensor for which there is an $i \in [3]$ such that the flattening $t_i$ has matrix rank one.
The asymptotic subrank, the slice rank and the support functionals are related in the following way:
\begin{equation}\label{supportknow}
\asympsubrank(t) \leq \limsup_n \slicerank(t^{\otimes n})^{1/n} \leq \min_\theta \zeta^\theta(t).
\end{equation}
(See \cite{christandl2017universal}.)
Thus, in an asymptotic fashion, the slice rank upper bounds the asymptotic subrank and hence lower bounds irreducibility. 
Any analysis of $\limsup_n \slicerank(t^{\otimes n})^{1/n}$ that we are aware of in the literature boils down to evaluating $\min_\theta \zeta^\theta(t)$. In particular, we are not aware of any example for which the right-most inequality in \eqref{supportknow} is strict.

In fact, for oblique\footnote{A tensor $t \in \FF^{n_1} \otimes \FF^{n_2} \otimes \FF^{n_3}$ is called \emph{oblique} if the support $\supp(t) \in [n_1] \times [n_2] \times [n_3]$ in some basis is an antichain in the product of the natural orders on the $[n_i]$. The matrix multiplication tensors $\langle a,b,c\rangle$ are examples of oblique tensors.} tensors the right-most inequality is an equality~\cite{sawin} (see also \cite[Section 4.6]{phdzuiddam}) and, as mentioned in \cref{subsec:support}, for tight
tensors both inequalities are equalities~\cite{strassen1991degeneration}.
Over the complex numbers, the quantum functionals play a special role in this comparison. As we mentioned in \cref{subsec:quantum}, the quantum functionals satisfy $F^\theta(t) \leq \zeta^\theta(t)$. It was shown in~\cite{christandl2017universal} that the minimum over the parameter $\theta$ equals the asymptotic slice rank. We thus have
\begin{equation}\label{quantumknow}
\asympsubrank(t) \leq \limsup_n \slicerank(t^{\otimes n})^{1/n} = \min_\theta F^\theta(t) \leq \min_\theta \zeta^\theta(t).
\end{equation}
For free\footnote{A tensor $t$ is called \emph{free} if in some basis any two different $a,b\in \supp(t)$ differ in at least two entries. Every tight tensor is oblique and every oblique tensor is free.} 
tensors the right-most inequality in \eqref{quantumknow} is an equality~\cite{christandl2017universal}.

\appendix
\section{Code to verify the numerical examples}\label{code}
The following Mathematica code generates the barrier values in the tables in \cref{intro} to arbitrary precision and for arbitrary parameters $q$ and $n$.
\subsection{Small Coppersmith--Winograd tensor $\cw_q$}
\begin{verbatim}
In[1]:= Table[
  2 Log2[q + 1]/(Log2[3] - 2/3 + (2/3) Log2[q]), {q, 2, 7}] // N

Out[1]= {2., 2.02538, 2.06244, 2.09627, 2.12549, 2.15064}
\end{verbatim}
\subsection{Big Coppersmith--Winograd tensor $\CW_q$}
\begin{verbatim}
In[1]:= g[b_, q_] := -((2 b q Log[2 b])/
   Log[2]) - ((1 - 3 b q) Log[1/3 (1 - 3 b q)])/(
  3 Log[2]) + ((-b q - 2/3 (1 - 3 b q)) Log[b q + 2/3 (1 - 3 b q)])/
  Log[2]

In[2]:= f[q_] := 
 Piecewise[{{g[
     q/(2 (-4 + q^2)) + 1/6 Sqrt[(32 + q^2)/(-4 + q^2)^2] /. {q -> 1},
      1], q == 1}, {g[1/9, q], 
    q == 2}, {g[-((-3 q + Sqrt[32 + q^2])/(6 (-4 + q^2))), q], 
    q >= 3}}]

In[3]:= Table[2 Log2[q + 2]/f[q], {q, 1, 6}] // N

Out[3]= {2.16805, 2.17795, 2.19146, 2.20551, 2.21913, 2.23201}
\end{verbatim}

\subsection{Reduced polynomial multiplication $t_n$}
\begin{verbatim}
In[1]:= g[n_] := g[n] = g /. FindRoot[1/(g-1)-n/(g^n-1)==(n-1)/3,{g,1+2/n}]

In[2]:= z[n_] := (g[n]^n - 1)/(g[n] - 1) g[n]^(-2 (n - 1)/3)

In[3]:= Table[2 Log2[n]/Log2[z[n]], {n, 2, 7}]

Out[3]= {2.17795, 2.16805, 2.15949, 2.15237, 2.14641, 2.14135}
\end{verbatim}

\subsection*{Acknowledgements}
MC acknowledges financial support from the European Research Council (ERC Grant Agreement No.~337603 and 81876) and VILLUM FONDEN via the QMATH Centre of Excellence (Grant No.~10059). 
This research was supported by the National Research, Development and Innovation Fund of Hungary within the Quantum Technology National Excellence Program (Project Nr.~2017-1.2.1-NKP-2017-00001) and via the research grants K124152, KH129601 (PV).
This material is based upon work directly supported by the National Science Foundation Grant No.~DMS-1638352 and indirectly supported by the National Science Foundation Grant No.~CCF-1900460. Any opinions, findings and conclusions or recommendations expressed in this material are those of the authors and do not necessarily reflect the views of the National Science Foundation~(JZ). 

\raggedright
\bibliographystyle{alphaurlpp}
\bibliography{diss_allall}
\end{document}